\newcommand{\bbC}{\mathbb{C}}
\newcommand{\bbF}{\mathbb{F}}
\newcommand{\bbR}{\mathbb{R}}
\newcommand{\bbZ}{\mathbb{Z}}
\newcommand{\bfB}{\boldsymbol{B}}
\newcommand{\bfc}{\boldsymbol{c}}
\newcommand{\bfe}{\boldsymbol{e}}
\newcommand{\bff}{\boldsymbol{f}}
\newcommand{\bfF}{\boldsymbol{F}}
\newcommand{\bfh}{\boldsymbol{h}}
\newcommand{\bfI}{\mathbf{I}}
\newcommand{\bfx}{\boldsymbol{x}}
\newcommand{\bfy}{\boldsymbol{y}}
\newcommand{\bfchi}{\boldsymbol{\chi}}
\newcommand{\bfone}{\boldsymbol{1}}
\newcommand{\bfphi}{\boldsymbol{\varphi}}
\newcommand{\bfPhi}{\boldsymbol{\Phi}}
\newcommand{\bfpsi}{\boldsymbol{\psi}}
\newcommand{\bfPsi}{\boldsymbol{\Psi}}
\newcommand{\calB}{\mathcal{B}}
\newcommand{\calD}{\mathcal{D}}
\newcommand{\calG}{\mathcal{G}}
\newcommand{\calH}{\mathcal{H}}
\newcommand{\calL}{\mathcal{L}}
\newcommand{\calM}{\mathcal{M}}
\newcommand{\calN}{\mathcal{N}}
\newcommand{\calR}{\mathcal{R}}
\newcommand{\calS}{\mathcal{S}}
\newcommand{\calV}{\mathcal{V}}
\newcommand{\rmi}{\mathrm{i}}
\newcommand{\rmT}{\mathrm{T}}
\newcommand{\hd}{\operatorname{d}}
\newcommand{\tr}{\operatorname{tr}}
\newcommand{\Tr}{\operatorname{Tr}}
\newcommand{\dist}{\operatorname{dist}}
\newcommand{\sgn}{\operatorname{sgn}}
\newcommand{\abs}[1]{|{#1}|}
\newcommand{\bigparen}[1]{\bigl({#1}\bigr)}
\newcommand{\biggparen}[1]{\biggl({#1}\biggr)}
\newcommand{\set}[1]{\{{#1}\}}
\newcommand{\norm}[1]{\|{#1}\|}
\newcommand{\ip}[2]{\langle{#1},{#2}\rangle}
\newtheorem{theorem}{Theorem}
\newtheorem{corollary}{Corollary}
\begin{document}
\title{Kirkman Equiangular Tight Frames and Codes}

\author{John~Jasper, Dustin~G.~Mixon and Matthew~Fickus~\IEEEmembership{Member,~IEEE}%
\thanks{J.~Jasper is with the Department of Mathematics, University of Missouri, Columbia, MO 65211, USA.}%
\thanks{D.~G.~Mixon and M.~Fickus are with the Department of Mathematics and Statistics, Air Force Institute of Technology, Wright-Patterson Air Force Base, OH 45433, USA, e-mail: Matthew.Fickus@gmail.com}}

\maketitle


\begin{abstract}
An equiangular tight frame (ETF) is a set of unit vectors in a Euclidean space whose coherence is as small as possible, equaling the Welch bound.
Also known as Welch-bound-equality sequences, such frames arise in various applications, such as waveform design and compressed sensing.
At the moment, there are only two known flexible methods for constructing ETFs: harmonic ETFs are formed by carefully extracting rows from a discrete Fourier transform; Steiner ETFs arise from a tensor-like combination of a combinatorial design and a regular simplex.
These two classes seem very different: the vectors in harmonic ETFs have constant amplitude, whereas Steiner ETFs are extremely sparse.
We show that they are actually intimately connected: a large class of Steiner ETFs can be unitarily transformed into constant-amplitude frames, dubbed Kirkman ETFs.
Moreover, we show that an important class of harmonic ETFs is a subset of an important class of Kirkman ETFs.
This connection informs the discussion of both types of frames: some Steiner ETFs can be transformed into constant-amplitude waveforms making them more useful in waveform design; some harmonic ETFs have low spark, making them less desirable for compressed sensing.
We conclude by showing that real-valued constant-amplitude ETFs are equivalent to binary codes that achieve the Grey-Rankin bound, and then construct such codes using Kirkman ETFs.
\end{abstract}

\begin{IEEEkeywords}
equiangular tight frame, Welch bound equality sequence, Welch bound, Grey-Rankin bound
\end{IEEEkeywords}

\section{Introduction}

An \textit{equiangular tight frame} (ETF) is a maximal packing of $N$ lines in an $M$-dimensional Euclidean space.
To be precise, the \textit{coherence} of $N$ unit vectors $\set{\bfphi_n}_{n\in\calN}$ in such a space is the maximal modulus of their inner products:
\begin{equation}
\label{equation.definition of coherence}
\mu:=\max_{n\neq n'}\abs{\ip{\bfphi_n}{\bfphi_{n'}}}.
\end{equation}
In applications such as waveform design~\cite{StrohmerH03} and compressed sensing~\cite{DeVore07}, one often wants vectors with low coherence.
Here, the best one can hope to achieve is the \textit{Welch bound}~\cite{Welch74}:
\begin{equation}
\label{equation.Welch bound}
\mu\geq\bigparen{\tfrac{N-M}{M(N-1)}}^\frac12.
\end{equation}
As detailed later on, equality in~\eqref{equation.Welch bound} is only achieved when the $\bfphi_n$'s are an ETF~\cite{StrohmerH03}, also known as a Welch bound equality (WBE) sequence.

ETFs are not easily found.
Indeed, despite over a decade of active research, only five general constructions of ETFs are known, and the first two of these are trivial: in any space of dimension $M$, we can always take an orthonormal basis ($N=M$) or a regular simplex ($N=M+1$).
All other known constructions involve combinatorial design.
For instance, one can build ETFs with $N=2M$ using conference matrices provided either $N=2^{j+1}$ or $N=p^j+1$ where $j$ is a positive integer and $p$ is an odd prime~\cite{StrohmerH03}.
This class of ETFs is limited in the sense that their \textit{redundancy} $N/M$ is necessarily two.
Moreover, they are closely related to other ETFs which are constructed using the two remaining methods~\cite{Renes07,Strohmer08}.

The fourth known construction method is much more versatile, relying on the well-studied topic of \textit{Abelian difference sets},
namely subsets $\calD$ of a commutative group $\calG$ with the property that the size of the set $\set{(d,d')\in\calD^2: g=d-d'}$ is independent of $g\in\calG$.
For decades, it has been known that a subset $\calD$ of $\calG$ is a difference set precisely when the modulus of the discrete Fourier transform (DFT) of its characteristic function is a perfect spike~\cite{Turyn65}.
As shown in \cite{StrohmerH03,XiaZG05,DingF07}, this implies that restricting the characters of $\calG$ to $\calD$ yields an ETF of $N=\abs{\calG}$ vectors for a space of dimension $M=\abs{\calD}$.
In particular, Singer and McFarland difference sets~\cite{JungnickelPS07} yield ETFs of almost arbitrary redundancy and size.

The fifth known construction method involves Steiner systems, namely a set $\calB$ of blocks (subsets) of a finite set $\calV$ which has the properties that (i) every block has the same number of points, (ii) every point is contained in the same number of blocks and (iii) any two points determine a unique block.
For many years, it has been known that such systems can be used to build \textit{strongly regular graphs}~\cite{GoethalsS70}.
Moreover, strongly regular graphs with certain parameters are known to be equivalent to real ETFs~\cite{HolmesP04,Waldron09}.
In~\cite{FickusMT12}, these ideas are distilled into a direct method for constructing real or complex ETFs via a tensor-like combination of the incidence matrix of Steiner system with a unimodular regular simplex.

Like harmonic ETFs, these \textit{Steiner ETFs} are extremely flexible, providing ETFs whose size and redundancy are arbitrary, up to an order of magnitude.
However, whereas harmonic ETFs have \textit{constant amplitude}---the entries of their frame vectors have constant modulus---Steiner ETFs are extremely sparse.
This sparsity can be a detriment in applications: in radio communication and radar, constant-amplitude waveforms allow more energy to be transmitted by power-limited hardware.
Moreover, though Steiner ETFs have optimal coherence, are thus good for coherence-based compressed sensing, they have terrible \textit{spark}: a small number of Steiner ETF elements can be linearly dependent~\cite{FickusMT12}.
As such, Steiner ETFs do not satisfy compressed sensing's Restricted Isometry Property (RIP) in a way that rivals that of random matrices.

In this paper, we provide a new method for unitarily transforming certain Steiner ETFs into constant-amplitude ETFs.
This method only works when the underlying Steiner system is \textit{resolvable}, meaning that its blocks $\calB$ can partitioned into several collections of blocks $\set{\calB_r}_{r\in\calR}$, where for any $r$, the blocks in $\calB_r$ form a partition of $\calV$.
Such systems were first made famous in 1850 by \textit{Kirkman's schoolgirl problem}, and as such, we dub these frames \textit{Kirkman ETFs}.

In the next section, we provide the basic mathematical background on Steiner ETFs.
In Section~3, we provide the Kirkman construction itself, and then use the existing literature on resolvable Steiner systems to construct several new families of constant-amplitude ETFs.
It turns out that one of these ``new" families---those arising from finite affine geometries---corresponds to one of the most important classes of harmonic ETFs, namely those constructed via McFarland difference sets;
as discussed in the fourth section, this identification allows us, for the first time, to seriously investigate the RIP properties of these McFarland ETFs.
In Section~5, we identify a real-valued constant-amplitude ETF with a self-complementary binary code, and in this context show that the Welch bound is equivalent to the Grey-Rankin bound of coding theory.
We then use the results from the previous sections to explicitly construct some Grey-Rankin-bound-equality codes.

\section{Preliminaries}
Throughout this paper, lowercase, uppercase and calligraphy denote an element of a finite set, the number of elements in that set, and the set itself, respectively.
For example, $m\in\calM$ where $M=\abs{\calM}$.
Also, $\bbC^{\calM}:=\set{\bfx:\calM\rightarrow\bbC}$ denotes the $M$-dimensional inner product space consisting of all complex-valued functions over $\calM$, and bold lowercase and uppercase denote vectors and operators/matrices, respectively.

The \textit{synthesis operator} of any vectors $\set{\bfphi_n}_{n\in\calN}$ in $\bbC^{\calM}$ is the linear operator $\bfPhi:\bbC^{\calN}\rightarrow\bbC^{\calM}$, $\bfPhi\bfy:=\sum_{n\in\calN}\bfy(n)\bfphi_n$.
In the special case $\calM=\set{1,\dotsc,M}$ and $\calN=\set{1,\dotsc,N}$, the synthesis operator is the $M\times N$ matrix $\bfPhi=[\bfphi_1\,\cdots\,\bfphi_N]$.
The \textit{analysis operator} $\bfPhi^*:\bbC^\calM\rightarrow\bbC^\calN$ is the adjoint of the synthesis operator, meaning $(\bfPhi^*\bfx)(n)=\bfphi_n^*\bfx=\ip{\bfx}{\bfphi_n}$.
In \textit{finite frame theory}, one seeks $\bfphi_n$'s that meet various application-motivated constraints and whose \textit{frame operator} $\bfPhi\bfPhi^*:\bbC^{\calM}\rightarrow\bbC^{\calM}$, $\bfPhi\bfPhi^*\bfx=\sum_{n\in\calN}\ip{\bfx}{\bfphi_n}\bfphi_n$ is as well-conditioned as possible.
In particular, $\set{\bfphi_n}_{n\in\calN}$ is a~\textit{unit norm tight frame} (UNTF) if $\norm{\bfphi_n}=1$ for all $n$ and if $\bfPhi\bfPhi^*=A\bfI$ for some constant $A$.
Here, since $MA=\Tr(A\bfI)=\Tr(\bfPhi\bfPhi^*)=\Tr(\bfPhi^*\bfPhi)=\sum_{n\in\calN}\norm{\bfphi_n}^2=N$, this constant $A$ is necessarily the frame's \textit{redundancy} $N/M$.

This paper is about \textit{equiangular tight frames} (ETFs), namely UNTFs $\set{\bfphi_n}_{n\in\calN}$ which have the additional property that the magnitude of $\ip{\bfphi_n}{\bfphi_{n'}}$ is independent of $n$ and $n'$.
It turns out that such frames have minimal coherence~\eqref{equation.definition of coherence}.
In short, letting $\set{\bfphi_n}_{n\in\calN}$ be any unit vectors in $\bbC^{\calM}$, we have
\begin{align}
\notag
0
&\leq\Tr[(\bfPhi\bfPhi^*-\tfrac NM\bfI)^2]\\
\notag
&=\Tr[(\bfPhi^*\bfPhi)^2]-\tfrac{N^2}M\\
\notag
&=\sum_{n\in\calN}\sum_{n'\in\calN}\abs{\ip{\bfphi_n}{\bfphi_{n'}}}^2-\tfrac{N^2}M\\
\label{equation.derivation of Welch bound}
&\leq N+N(N-1)\mu^2-\tfrac{N^2}M.
\end{align}
Solving for $\mu$ yields the Welch bound~\eqref{equation.Welch bound}.
Moreover, equality in~\eqref{equation.Welch bound} forces equality throughout~\eqref{equation.derivation of Welch bound}.
In particular, any unit norm vectors $\set{\bfphi_n}_{n\in\calN}$ which achieve the Welch bound are necessarily equiangular (since $\abs{\ip{\bfphi_n}{\bfphi_{n'}}}=\mu$ for all $n\neq n'$), and also a tight frame, since the Frobenius norm of $\bfPhi\bfPhi^*-\frac NM\bfI$ is zero, forcing $\bfPhi\bfPhi^*=\frac NM\bfI$.
Conversely, any ETF is both tight and equiangular, yielding equality in the first and last inequalities in~\eqref{equation.derivation of Welch bound}, respectively, and thus equality in~\eqref{equation.Welch bound}.

In the next section, we convert some of the Steiner ETFs of~\cite{FickusMT12} into constant-amplitude ETFs.
In general, Steiner ETFs are built from special types of balanced incomplete block designs known as \textit{$(2,K,V)$-Steiner systems}.
Such a system consists of a set of $V$ points $\calV$ along with a set of $B$ blocks (subsets) of $\calV$, denoted $\calB$, with the property that every block contains exactly $K$ points, every point is contained in exactly $R$ blocks, and every pair of points is contained in exactly one block.
Ordering the points and blocks, we can form a real, $\set{0,1}$-valued $B\times V$ \textit{incidence matrix} $\bfB$ that indicates which points belong to which blocks; the rows of this matrix must sum to $K$, its columns must sum to some constant number $R$, and the dot product of any two distinct columns must be $1$.
For example, consider the $(2,2,4)$-Steiner system that consists of all $2$-element subsets of a set of $4$ elements:
\begin{equation}
\label{equation.6x16 B}
\bfB
=\begin{bmatrix}
1&1&0&0\\
0&0&1&1\\
1&0&1&0\\
0&1&0&1\\
1&0&0&1\\
0&1&1&0
\end{bmatrix}.
\end{equation}
Here, $B=6$ and $R=3$.

As shown in~\cite{FickusMT12}, every $(2,K,V)$-Steiner system generates an ETF of $N=V(R+1)$ vectors in a space of dimension $M=B$.
The main idea is to take a tensor-like combination of $\bfB$ with a unimodular regular simplex of $R+1$ vectors in $R$-dimensional space.
For example, for the $(2,2,4)$-Steiner system~\eqref{equation.6x16 B} in which $R=3$, we can construct such a simplex by removing a row from a $4\times 4$ matrix with orthogonal columns and unimodular entries, such as a DFT or Hadamard matrix:
\begin{equation}
\label{equation.6x16 F}
\bfF
=\begin{bmatrix}
+&-&+&-\\
+&+&-&-\\
+&-&-&+
\end{bmatrix}.
\end{equation}
Here and throughout, ``$+$" and ``$-$" denote $1$ and $-1$, respectively.
To construct an ETF from $\bfB$ and $\bfF$, we replace each of the $R$ nonzero entries in any given column of $\bfB$ with a corresponding row from $\bfF$, and replace each zero entry of $\bfB$ with a $(R+1)$-long row of zeros.
We then normalize the resulting columns.
In particular, Figure~\ref{figure.6x16 Phi} gives the $6\times 16$ ETF $\bfPhi$ obtained by ``tensoring"~\eqref{equation.6x16 B} with~\eqref{equation.6x16 F} in this fashion.
\begin{figure*}
\vspace{-\bigskipamount}
\begin{equation*}
\bfPhi=\frac1{\sqrt{3}}\left[\begin{array}{cccc|cccc|cccc|cccc}
+&-&+&-&+&-&+&-&0&0&0&0&0&0&0&0\\
0&0&0&0&0&0&0&0&+&-&+&-&+&-&+&-\\
+&+&-&-&0&0&0&0&+&+&-&-&0&0&0&0\\
0&0&0&0&+&+&-&-&0&0&0&0&+&+&-&-\\
+&-&-&+&0&0&0&0&0&0&0&0&+&-&-&+\\
0&0&0&0&+&-&-&+&+&-&-&+&0&0&0&0
\end{array}\right].
\end{equation*}
\caption{\label{figure.6x16 Phi}A Steiner ETF of $16$ vectors in $6$-dimensional space obtained by ``tensoring" the incidence matrix \eqref{equation.6x16 B} of a $(2,2,4)$-Steiner system with a regular simplex of $4$ vectors in $3$-dimensional space~\eqref{equation.6x16 F} according to~\cite{FickusMT12}.
Here, the Welch bound~\eqref{equation.Welch bound} is $1/3$, and any two distinct columns have a dot product of this magnitude.
Indeed, grouping the columns as $V=4$ sets of $R+1=4$ vectors (as pictured), any two distinct columns from the same set (like the first and second columns of $\bfPhi$) have a dot product of $-1/3$ since it corresponds to a dot product of distinct columns in our regular simplex~\eqref{equation.6x16 F}.
Meanwhile, any two columns from distinct sets (like the first and fifth columns of $\bfPhi$) have only one point of common support, since any two distinct points in our Steiner system determine a unique block;
as such, the dot product of such columns has value $\pm1/3$.
Such ETFs are sparse---many of the entries of $\bfPhi$ are zero---and also have low spark: the first four of these vectors are linearly dependent.
In this paper, we show how to unitarily transform this matrix into a constant-amplitude ETF, a trick which ``corrects" its sparsity but not its spark.
Our approach will only work due to the fact that our Steiner system here is resolvable: the first and second blocks (rows) in~\eqref{equation.6x16 B} yield a partition of the points (columns), as do its third and fourth blocks, and its fifth and sixth blocks.
}
\end{figure*}

Since any finite-dimensional space always contains a unimodular regular simplex, the only restrictions on the existence of such ETFs arise from restrictions on Steiner systems themselves.
For example, the $B$ and $R$ parameters of a $(2,K,V)$-Steiner system are uniquely determined by $K$ and $V$ according to the necessary relationships that
\begin{equation}
\label{equation.Steiner parameter relationships}
BK=VR,\quad R(K-1)=V-1.
\end{equation}
The first identity follows from counting the total number of $1$'s in the incidence matrix $\bfB$ both row-wise and column-wise; the second follows from counting the number of $1$'s in $\bfB$ that lie to the right of a $1$ in the first column.
In particular, for a $(2,K,V)$-Steiner system to exist, both $R=(V-1)/(K-1)$ and $B=V(V-1)/[K(K-1)]$ must be integers.

Our parameters must also satisfy $B\geq V$; known as \textit{Fisher's inequality}, this follows from the fact that the $V\times V$ matrix $\bfB^\rmT \bfB$ is necessarily of full rank, since its off-diagonal entries are $1$ while its diagonal entries are $R=(V-1)/(K-1)>1$.
These facts are important here since the parameters $K$ and $V$ indicate the dimensions of the resulting Steiner ETF.
Indeed, in light of~\eqref{equation.Steiner parameter relationships}, the redundancy of a Steiner ETF is
\begin{equation*}
\tfrac{N}{M}
=\tfrac{V(R+1)}{B}
=K\tfrac{V(R+1)}{BK}
=K\tfrac{V(R+1)}{VR}
=K(1+\tfrac1R).
\end{equation*}
Since~\eqref{equation.Steiner parameter relationships} and Fisher's inequality give $K/R=V/B\leq 1$, the redundancy of a Steiner ETF is essentially $K$.
Moreover, for any fixed $K$, both $M$ and $N$ grow quadratically with $V$:
\begin{equation*}
M=B=\tfrac{V(V-1)}{K(K-1)},
\quad
N=V(R+1)=V(\tfrac{V-1}{K-1}+1).
\end{equation*}
In particular, in order to build ETFs of various sizes and redundancies, we need explicit constructions of $(2,K,V)$-Steiner systems which permit flexible, independent control of both $K$ and $V$.
There are three known families of such systems~\cite{ColbournM07}, all arising from finite geometry.

To be precise, for any prime power $q$ and $j\geq 1$, there exist affine geometry-based Steiner systems with $K=q$ and $V=q^{j+1}$.
For $j\geq 2$, there also exist projective geometry-based systems with $K=q+1$ and $V=(q^{j+1}-1)(q-1)$.
In either case, varying $q$ and $j$ controls the redundancy and size of the ETF, respectively.
The third family is \textit{Denniston designs}~\cite{ColbournM07} in which $K=2^i$, $V=2^{i+j}+2^i-2^j$ for some $2\leq i<j$ which arise from maximal arcs in projective spaces~\cite{Denniston69}.
Importantly, both the affine and Denniston designs are resolvable~\cite{FurinoMY96}.
As we now discuss, this means we can transform them into constant-amplitude ETFs.

\section{Kirkman ETFs}
\label{section.Kirkman ETFs}

In this section, we introduce a method for unitarily transforming certain Steiner ETFs, like the one depicted in Figure~\ref{figure.6x16 Phi}, into constant-amplitude ETFs.
This method requires the underlying Steiner system to be \textit{resolvable}, meaning its blocks $\calB$ can be partitioned into disjoint subcollections $\set{\calB_r}_{r\in\calR}$ so that the blocks in any given $\calB_r$ form a partition for $\calV$.
For example, the $(2,2,4)$-Steiner system given in~\eqref{equation.6x16 B} is resolvable: its first and second blocks (rows) form a partition for our underlying set of $V=4$ points, as do its third and fourth, and its fifth and sixth.
The main idea of this new method is to multiply the synthesis matrix of a resolvable Steiner ETF, like Figure~\ref{figure.6x16 Phi}, by a block-Hadamard/DFT matrix to obtain a constant-amplitude ETF; see Figure~\ref{figure.6x16 Psi}.
\begin{figure*}
\begin{align*}
\bfPsi&=\frac1{\sqrt{2}}\left[\begin{array}{cc|cc|cc}
+&+&0&0&0&0\\
+&-&0&0&0&0\\
\hline
0&0&+&+&0&0\\
0&0&+&-&0&0\\
\hline
0&0&0&0&+&+\\
0&0&0&0&+&-
\end{array}\right]\times
\frac1{\sqrt{3}}\left[\begin{array}{cccc|cccc|cccc|cccc}
+&-&+&-&+&-&+&-&0&0&0&0&0&0&0&0\\
0&0&0&0&0&0&0&0&+&-&+&-&+&-&+&-\\
\hline
+&+&-&-&0&0&0&0&+&+&-&-&0&0&0&0\\
0&0&0&0&+&+&-&-&0&0&0&0&+&+&-&-\\
\hline
+&-&-&+&0&0&0&0&0&0&0&0&+&-&-&+\\
0&0&0&0&+&-&-&+&+&-&-&+&0&0&0&0
\end{array}\right]\\
&=\frac1{\sqrt{6}}\left[\begin{array}{cccccccccccccccccc}
+&-&+&-&+&-&+&-&+&-&+&-&+&-&+&-\\
+&-&+&-&+&-&+&-&-&+&-&+&-&+&-&+\\
+&+&-&-&+&+&-&-&+&+&-&-&+&+&-&-\\
+&+&-&-&-&-&+&+&+&+&-&-&-&-&+&+\\
+&-&-&+&+&-&-&+&+&-&-&+&+&-&-&+\\
+&-&-&+&-&+&+&-&-&+&+&-&+&-&-&+
\end{array}\right]
\end{align*}
\caption{\label{figure.6x16 Psi}
Constructing a constant-amplitude ETF of 16 vectors in $6$-dimensional space by multiplying the Kirkman (resolvable Steiner) ETF of Figure~\ref{figure.6x16 Phi} by a unitary block-DFT/Hadamard matrix.
Here, the horizontal lines in $\bfPhi$ indicate the way in which the $B=6$ blocks (rows) of the underlying Steiner system~\eqref{equation.6x16 B} can be broken up into $R=3$ distinct partitions of a set of $V=4$ elements.
We obtain a new ETF $\bfPsi$ by multiplying $\bfPhi$ on the left by a unitary block-diagonal matrix consisting of $R=3$ DFT/Hadamard matrices.
Multiplying these two matrices blockwise, we see that the resulting ETF has constant-amplitude: every entry of $\bfPsi$ is a product of exactly one nonzero entry of $\bfPhi$ with an entry of a DFT/Hadamard matrix.
Having constant-amplitude, this ETF $\bfPsi$ is better suited than $\bfPhi$ for certain radio waveform design problems like CDMA~\cite{HeathTDS04}.
Nonetheless, the second ETF is obtained by applying a unitary operator to the first, meaning they share many of the same linear algebraic properties.
In particular, since the first four columns of $\bfPhi$ are linearly dependent, the first four columns of $\bfPsi$ are as well.
In Section~\ref{section.Kirkman vs harmonic}, we show that an important subclass of harmonic ETFs, namely those that arise from McFarland  difference sets, can be built with this approach, and thus are unitarily equivalent to very sparse Steiner ETFs.
This allows us, for the first time, to observe that such ETFs do not satisfy compressed sensing's RIP to any degree that rivals random constructions.
}
\end{figure*}

Not every Steiner system is resolvable.
Indeed, if \textit{any} subset of the blocks $\calB$ forms a partition of $\calV$, then $K$ must divide $V$: each block contains $K$ points and there are $V$ points total.
This requirement alone prohibits the famous $(2,3,7)$-Steiner system known as the \textit{Fano plane} from being resolvable.
When coupled with the previous restriction that $K-1$ divides $V-1$, this new condition subsumes the previous requirement that $K(K-1)$ divides $V(V-1)$.
Moreover, since we necessarily have $V\equiv 1\bmod K-1$ and $V\equiv0\bmod K$ where $K$ is relatively prime to $K-1$, the Chinese Remainder Theorem gives that these two conditions are equivalent to having $V\equiv K\bmod K(K-1)$.
For resolvable designs, it also turns out~\cite{FurinoMY96} that Fisher's inequality can be strengthened to \textit{Bose's condition} that $B\geq V+R-1$.

Nevertheless, many Steiner systems are resolvable, such as those arising from affine geometries over finite fields and Denniston designs~\cite{FurinoMY96}.
It seems to be an open question whether or not projective geometries with $K=q+1$ and $V=(q^{j+1}-1)/(q-1)$ are resolvable when $j$ is odd.
At least is some cases, the answer is yes: when $q=2$ and $j=2$, this is~\textit{Kirkman's schoolgirl problem}.
Since this famous problem is so closely associated with resolvable Steiner systems, we refer to the constant-amplitude ETFs that arise from such systems as \textit{Kirkman ETFs}.

We now formally verify that every resovable Steiner system generates a (constant-amplitude) Kirkman ETF that is unitarily equivalent to a (sparse) Steiner ETF.
Here, as usual, the quickest way to verify that certain vectors form an ETF is to show they satisfy the Welch bound~\eqref{equation.Welch bound} with equality.
In this Steiner-system-induced setting where $M=B$ and $N=V(R+1)$, the lower bound itself is simply $1/R$; noted in~\cite{FickusMT12},
this can be most easily seen by making repeated use of the identities~\eqref{equation.Steiner parameter relationships} to show $M(N-1)/(N-M)=R^2$.
This is a special case of a known necessary integrality condition~\cite{SustikTDH07}; if all the entries in an ETF are suitably-normalized roots of unity, then $M(N-1)/(N-M)$ is necessarily an integer.

Before stating the result, it is helpful to introduce some notation.
Note that in any resolvable $(2,K,V)$-Steiner system, the number of blocks in any single partition is $V/K$.
Since the total number of blocks is $B=VR/K$, the number of distinct partitions of $\calV$ is $R$.
As such, we enumerate these partitions using some $R$-element indexing set $\calR$, and write our blocks as the disjoint union $\calB=\sqcup_{r\in\calR}\calB_r$.
Here, for any $r\in\calR$, the $S:=V/K$ blocks that lie in $\calB_r$ form a partition for $\calV$, and we index them with some $S$-element indexing set $\calS$.
To be precise, for any $r\in\calR$, let $\calB_r=\set{b_{r,s}}_{s\in\calS}$ where $\calV=\sqcup_{s\in\calS}b_{r,s}$.
We now state and prove our first main result:
\begin{theorem}
\label{theorem.Kirkman construction}
Let $(\calV,\calB)$ be a resolvable $(2,K,V)$-Steiner system:
let $\set{\calB_r}_{r\in\calR}$ be a partition of $\calB$ where for any $r$, $\calB_r=\set{b_{r,s}}_{s\in\calS}$ is a partition of $\calV$.
Let $\set{\bff_{u}}_{u=0}^{R}$ be a unimodular regular simplex in $\bbC^\calR$ and let $\set{\bfh_{s}}_{s\in\calS}$ be a unimodular orthogonal basis for $\bbC^{\calS}$.
Then letting $\calM=\calR\times\calS$ and $\calN=\calV\times\set{0,\dotsc,R}$, the $V(R+1)$ vectors $\set{\bfphi_{u,v}}_{(u,v)\in\calN}$ form a Steiner ETF for the $B$-dimensional space $\bbC^{\calM}$:
\begin{equation}
\label{equation.definition of resolvable Steiner ETF}
\bfphi_{u,v}(r,s)
:=R^{-\frac12}\left\{\begin{array}{cl}\bff_{u}(r),&v\in b_{r,s},\\0,&v\notin b_{r,s}.\end{array}\right.
\end{equation}
Moreover, applying a unitary operator to this Steiner ETF yields the Kirkman ETF $\set{\bfpsi_{u,v}}_{(u,v)\in\calN}$ defined by
\begin{equation}
\label{equation.definition of Kirkman ETF}
\bfpsi_{u,v}(r,s)
:=B^{-\frac12}\bff_{u}(r)\bfh_{s(r,v)}(s),
\end{equation}
where for any $r\in\calR$ and $v\in\calV$, $s(r,v)$ denotes the unique $s\in\calS$ such that $v\in b_{r,s}$.
\end{theorem}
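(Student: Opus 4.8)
The plan is to establish two separate claims: first, that the vectors $\set{\bfphi_{u,v}}$ in~\eqref{equation.definition of resolvable Steiner ETF} form a Steiner ETF (which matches the construction of~\cite{FickusMT12}, so the main verification is that they achieve the Welch bound with equality), and second, that the vectors $\set{\bfpsi_{u,v}}$ in~\eqref{equation.definition of Kirkman ETF} are obtained from the $\bfphi_{u,v}$'s by a single unitary operator acting on $\bbC^{\calM}$. Since the excerpt already proves that achieving the Welch bound is equivalent to being an ETF, and that the relevant bound here is $1/R$, both ETF claims reduce to computing inner products and verifying the tight-frame and equiangular conditions directly.

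\emph{First I would} verify the Steiner ETF claim by computing $\ip{\bfphi_{u,v}}{\bfphi_{u',v'}}$ explicitly. By~\eqref{equation.definition of resolvable Steiner ETF}, the support of $\bfphi_{u,v}$ consists of the pairs $(r,s(r,v))$ as $r$ ranges over $\calR$, so two frame vectors overlap in coordinate $(r,s)$ precisely when $v,v'\in b_{r,s}$. For $v=v'$, the vectors share all $R$ coordinates in their support and the inner product reduces to $R^{-1}\sum_{r}\overline{\bff_u(r)}\bff_{u'}(r)=R^{-1}\ip{\bff_u}{\bff_{u'}}$, which equals $1$ when $u=u'$ (giving unit norm) and has modulus $1/R$ when $u\neq u'$ since $\set{\bff_u}_{u=0}^{R}$ is a unimodular regular simplex in $\bbC^{\calR}$. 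For $v\neq v'$, the Steiner property that any two points lie in a unique common block means there is exactly one pair $(r,s)$ with $v,v'\in b_{r,s}$, so the inner product is a single term of modulus $R^{-1}\abs{\bff_u(r)}\abs{\bff_{u'}(r)}=1/R$. Thus every off-diagonal inner product has modulus $1/R$, which is exactly the Welch bound in this setting, so by the equivalence derived in~\eqref{equation.derivation of Welch bound} these vectors form an ETF.

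\emph{Next I would} exhibit the unitary. For each $r\in\calR$, the blocks $\set{b_{r,s}}_{s\in\calS}$ partition $\calV$, so the map sending the standard basis vector indexed by $s\in\calS$ to the unimodular orthogonal vector $\bfh_s\in\bbC^{\calS}$ (rescaled to unit norm, i.e.\ by $S^{-1/2}$) is unitary on the $\calS$-factor. Assembling these across the $R$ values of $r$ gives a block-diagonal unitary $\bfU$ on $\bbC^{\calM}=\bbC^{\calR\times\calS}$, acting within each $\set{r}\times\calS$ block. I would then compute $(\bfU\bfphi_{u,v})(r,s)$ and check it equals $\bfpsi_{u,v}(r,s)$ up to the normalization reconciling $R^{-1/2}$ with $B^{-1/2}$: applying the $r$-th Hadamard/DFT block to $\bfphi_{u,v}$ replaces its single nonzero $\calS$-entry (at $s=s(r,v)$, with value $R^{-1/2}\bff_u(r)$) by the full vector whose $s$-entry is proportional to $\bff_u(r)\bfh_{s(r,v)}(s)$, matching~\eqref{equation.definition of Kirkman ETF} once the factor $S^{-1/2}$ is folded in via $B=RS$. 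Since unitary operators preserve all inner products, the $\bfpsi_{u,v}$'s inherit the ETF property automatically, and the product structure $\bff_u(r)\bfh_{s(r,v)}(s)$ of unimodular factors shows every entry of $\bfpsi_{u,v}$ has constant modulus $B^{-1/2}$.

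\emph{The main obstacle} I anticipate is purely bookkeeping rather than conceptual: tracking the three normalization constants $R^{-1/2}$, $S^{-1/2}$, and $B^{-1/2}$ through the unitary and confirming they reconcile via $B=RS$, together with carefully defining the block-diagonal unitary so that each $\bfh_s$ is correctly indexed by the partition at level $r$. The genuinely substantive ingredients—the regular-simplex identity giving $\abs{\ip{\bff_u}{\bff_{u'}}}=1$ for $u\neq u'$, the Steiner unique-block property collapsing the $v\neq v'$ inner product to one term, and the unitary invariance of inner products—are each immediate, so no hard estimate is required.
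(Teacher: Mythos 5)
Your proposal is correct, and the two halves of it relate to the paper's proof differently. The first half—showing $\set{\bfphi_{u,v}}$ is an ETF by splitting into the cases $v=v'$ (where the inner product collapses to $\tfrac1R\ip{\bff_u}{\bff_{u'}}$ by resolvability) and $v\neq v'$ (where the unique-block property leaves a single unimodular summand)—is exactly the paper's argument. The second half is where you genuinely diverge: the paper never constructs the unitary. Instead it computes $\ip{\bfpsi_{u,v}}{\bfpsi_{u',v'}}$ directly, rewrites $\ip{\bfh_{s(r,v)}}{\bfh_{s(r,v')}}$ as $\tfrac BR\sum_{s\in\calS}\bfone_{b_{r,s}}(v)\bfone_{b_{r,s}}(v')$ using orthogonality of the $\bfh_s$'s, and concludes that the Gram matrices of $\set{\bfpsi_{u,v}}$ and $\set{\bfphi_{u,v}}$ coincide; unitary equivalence is then asserted from this Gram-matrix equality (a standard fact the paper does not spell out, remarking only at the end that the transform ``is in truth block-unitary''). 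You instead exhibit the block-diagonal unitary $\bfU=\bigoplus_{r\in\calR}\bfU_r$, $\bfU_r\colon\bfe_s\mapsto S^{-\frac12}\bfh_s$, and verify entrywise that $\bfU\bfphi_{u,v}=\bfpsi_{u,v}$, the normalizations reconciling via $B=RS$ (indeed $B=VR/K$ and $S=V/K$). Your route buys two things: the ETF property of $\set{\bfpsi_{u,v}}$ and the unitary-equivalence claim both follow from one short computation, with no appeal to the abstract equal-Gram-matrices-imply-unitary-equivalence lemma; and the unitary you build is precisely the block Hadamard/DFT matrix the paper only displays pictorially in its figure. The paper's route, conversely, avoids any bookkeeping about how the unitary is indexed and yields the Gram-matrix identity as a reusable byproduct. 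Both are complete and correct; yours is arguably the more self-contained proof of the ``moreover'' clause.
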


\begin{proof}
We first prove that $\set{\bfphi_{u,v}}_{(u,v)\in\calN}$ is an ETF.
To do this, it suffices to show that each $\bfphi_{u,v}$ is unit norm and that the inner product of any distinct two of these vectors has modulus equal to the Welch bound $1/R$.
In general, we have
\begin{equation}
\label{equation.proof that Kirkman yields ETFs 1}
\ip{\bfphi_{u,v}}{\bfphi_{u',v'}}
=\sum_{r\in\calR}\sum_{s\in\calS}\bfphi_{u,v}(r,s)[\bfphi_{u',v'}(r,s)]^*.
\end{equation}
In the special case where $v=v'$, note that for any $r\in\calR$, the fact that $\set{b_{r,s}}_{s\in\calS}$ is a partition of $\calV$ implies there is exactly one $s\in\calS$ such that $v\in b_{r,s}$.
In light of~\eqref{equation.definition of resolvable Steiner ETF}, this fact reduces \eqref{equation.proof that Kirkman yields ETFs 1} in this case to
\begin{equation*}
\ip{\bfphi_{u,v}}{\bfphi_{u',v}}
=\tfrac1R\sum_{r\in\calR}\bff_{u}(r)[\bff_{u'}(r)]^*
=\tfrac1R\ip{\bff_{u}}{\bff_{u'}}.
\end{equation*}
When coupled with the fact that $\set{\bff_{u}}_{u=0}^{R}$ is a unimodular regular simplex, this implies that the $\bfphi_{u,v}$'s have unit norm and satisfy $\abs{\ip{\bfphi_{u,v}}{\bfphi_{u',v}}}=1/R$ whenever $u\neq u'$, as needed.
To show that we also have $\abs{\ip{\bfphi_{u,v}}{\bfphi_{u',v'}}}=1/R$ whenever $v\neq v'$, recall that since $(\calV,\calB)$ is a $(2,K,V)$-Steiner system, there is exactly one block $b=b_{r_0,s_0}$ that contains both $v$ and $v'$.
Again recalling~\eqref{equation.definition of resolvable Steiner ETF}, this means that there is only one nonzero summand of \eqref{equation.proof that Kirkman yields ETFs 1}, yielding
\begin{equation*}
\abs{\ip{\bfphi_{u,v}}{\bfphi_{u',v'}}}
=\tfrac1R\abs{\bff_{u}(r_0)}\abs{\bff_{u'}(r_0)}
=\tfrac1R.
\end{equation*}
Thus, $\set{\bfphi_{u,v}}_{(u,v)\in\calN}$ is an ETF, as claimed.
For the second conclusion, note that by~\eqref{equation.definition of Kirkman ETF}, $\ip{\bfpsi_{u,v}}{\bfpsi_{u',v'}}$ is:
\begin{align}
\notag
&\tfrac1B\sum_{r\in\calR}\sum_{s\in\calS}\bff_{u}(r)\bfh_{s(r,v)}(s)[\bff_{u'}(r)\bfh_{s(r,v')}(s)]^*\\
\label{equation.proof that Kirkman yields ETFs 2}
&\quad=\tfrac1B\sum_{r\in\calR}\bff_{u}(r)[\bff_{u'}(r)]^*\ip{\bfh_{s(r,v)}}{\bfh_{s(r,v')}}.
\end{align}
Fixing $r$ for the moment, note that since $\set{\bfh_{s}}_{s\in\calS}$ is a unimodular orthogonal basis for the space $\bbC^{\calS}$ of dimension $S=V/K=B/R$, we have that $\ip{\bfh_{s(r,v)}}{\bfh_{s(r,v')}}=B/R$ when $s(r,v)=s(r,v')$ and is otherwise zero.
Since $s(r,v)$ denotes the unique $s\in\calS$ such that $v\in b_{r,s}$, this implies
\begin{equation}
\label{equation.proof that Kirkman yields ETFs 3}
\ip{\bfh_{s(r,v)}}{\bfh_{s(r,v')}}
=\tfrac BR\sum_{s\in\calS}\bfone_{b_{r,s}}(v)\bfone_{b_{r,s}}(v'),
\end{equation}
where $\bfone_{b_{r,s}}:\calV\rightarrow\set{0,1}\subseteq\bbC$ is the characteristic function of the block $b_{r,s}$.
For every $r\in\calR$, substituting~\eqref{equation.proof that Kirkman yields ETFs 3} into~\eqref{equation.proof that Kirkman yields ETFs 2} and then recalling~\eqref{equation.definition of resolvable Steiner ETF} gives that
\begin{align}
\notag
\ip{\bfpsi_{u,v}}{\bfpsi_{u',v'}}
&=\tfrac1R\sum_{r\in\calR}\sum_{s\in\calS}\bff_{u}(r)[\bff_{u'}(r)]^*\bfone_{b_{r,s}}(v)\bfone_{b_{r,s}}(v')\\
\notag
&=\sum_{r\in\calR}\sum_{s\in\calS}\bfphi_{u,v}(r,s)[\bfphi_{u',v'}(r,s)]^*\\
\label{equation.proof that Kirkman yields ETFs 4}
&=\ip{\bfphi_{u,v}}{\bfphi_{u',v'}}.
\end{align}
Since~\eqref{equation.proof that Kirkman yields ETFs 4} holds for all $(u,v),(u',v')\in\calN$, we know that $\set{\bfpsi_{u,v}}_{(u,v)\in\calN}$ is also an ETF, and moreover, is obtained by applying a unitary transformation to $\set{\bfphi_{u,v}}_{(u,v)\in\calN}$.
In truth, this transformation is a block-unitary transform, but we do not need this specificity for the work that follows.
\end{proof}

For the remainder of this section, we consider the ramifications of Theorem~\ref{theorem.Kirkman construction} on the existence of constant-amplitude ETFs.
In particular, we first describe Kirkman ETFs that arise from known \textit{flexible} families of resolvable $(2,K,V)$-Steiner systems, meaning they permit independent control of $K$ and $V$.
We then describe some~\textit{inflexible} families, meaning that $K$ uniquely determines $V$, or vice versa.
Finally, we conclude this section with a discussion of the known asymptotic existence results for resolvable Steiner systems.

For each of these families, we state whether or not constant-amplitude ETFs with those parameters have been found before.
To be clear, the existence of Steiner ETFs of these sizes was already noted in~\cite{FickusMT12}.
However, before Theorem~\ref{theorem.Kirkman construction}, the only known method for constructing constant-amplitude ETFs was to use difference sets~\cite{XiaZG05,DingF07}.
We also do our best to answer a deeper question: whether or not a given Kirkman ETF is actually a harmonic ETF in disguise.
Note that this would necessarily imply that there exist difference sets $\calD$ of $M=B$ elements in Abelian groups $\calG$ of order $N=V(R+1)$.
This, in turn, requires that
\begin{equation*}
\Lambda
:=\tfrac{M(M-1)}{N-1}
=\tfrac{V(V-K)}{K^2(K-1)}
\end{equation*}
is an integer, since $\Lambda$ is the number of times any nonzero element of $\calG$ may be written as a difference of two elements in $\calD$.
However, every Kirkman ETF automatically satisfies this integrality condition: if a resolvable $(2,K,V)$-Steiner system exists, then $V\equiv K\bmod K(K-1)$; writing $V=WK(K-1)+K$ gives $\Lambda=W[W(K-1)+1]$.
Moreover, this implies that the \textit{degree} $M-\Lambda$ of such a difference set is necessarily the perfect square $M-\Lambda=[W(K-1)+1]^2$, meaning that the necessary conditions of the Bruck-Ryser-Chowla Theorem are automatically satisfied whenever $N$ is even~\cite{FurinoMY96}.
As such, trying to show that a given Kirkman ETF is \textit{not} harmonic can quickly lead to hard, open problems concerning the existence of difference sets.

\subsection{Flexible Kirkman ETFs}

\subsubsection{Affine geometries over finite fields}
For any $j\geq 1$ and prime power $q$, there exists a resolvable $(2,K,V)$-Steiner system with $K=q$ and $V=q^{j+1}$.
Here, the points $\calV$ in this design are the vectors in $\bbF_q^{j+1}$ where $\bbF_q$ is the finite field of order $q$.
Meanwhile, the blocks $\calB$ are affine lines in this space, namely sets of the form $\set{au+v : a\in\bbF_q}$ for some direction vector $u\in\bbF_q^{j+1}\backslash\set{0}$ and initial point $v\in\bbF_q^{j+1}$.
These systems play an important role in the theory of the next section, and we describe them more fully there.
For now, the most important things to note are that (i) these systems are easy to construct explicitly, meaning the construction of Theorem~\ref{theorem.Kirkman construction} can be truly implemented; (ii) the resulting Kirkman ETFs consist of $N$ vectors in an $M$-dimensional space where
\begin{equation}
\label{equation.affine Kirkman parameters}
M=q^j\biggparen{\frac{q^{j+1}-1}{q-1}},
\quad
N=q^{j+1}\biggparen{\frac{q^{j+1}-1}{q-1}+1};
\end{equation}
and that (iii) the redundancy and size of this ETF can be controlled by manipulating $q$ and $j$, respectively.
The existence of constant-amplitude ETFs with these parameters is not new: harmonic ETFs with dimensions~\eqref{equation.affine Kirkman parameters} can be constructed with McFarland difference sets~\cite{DingF07}.
In the next section, we show this is not a coincidence: we prove that every McFarland harmonic ETF is a Kirkman ETF, and as such, is unitarily equivalent to a low-spark, sparse Steiner ETF.

\subsubsection{Denniston designs}
For any positive integers $i$ and $j$, $i\leq j$, there exists a resolvable $(2,K,V)$-Steiner system with $K=2^i$ and $V=2^{i+j}+2^i-2^j$.
The construction is nontrivial: one constructs a \textit{maximal arc} in the projective plane of order $2^j$ using an irreducible quadratic form~\cite{Denniston69}, and then constructs a resolvable design in terms of this arc~\cite{FurinoMY96}.
The resulting Kirkman ETF has $M=(2^j+1)(2^j+1-2^{j-i})$ vectors in a space of dimension $N=2^i(2^j+2)(2^j+1-2^{j-i})$.

Note that when $i=j$, these designs have the same parameters as an affine geometry where $q=2^i$.
Meanwhile, for $i<j$, the constant-amplitude ETFs generated by these designs seem to be new.
For example, when $i=2$ and $j=3$, we find that there exists a constant-amplitude ETF of $N=280$ vectors in a space of dimension $M=63$; such an ETF is not found in the existing literature~\cite{XiaZG05,DingF07}.
Such ETFs might be harmonic: we did not find any examples of $i$ and $j$ for which it is known that there cannot exist an $M$-element difference set in an Abelian group of order $N$; since $N$ is even, the Bruck-Ryser-Chowla Theorem is toothless.
We leave a more thorough investigation of this problem for future research.

\subsection{Inflexible Kirkman ETFs}

\subsubsection{Round-robin tournaments}

For any positive integer $V$, consider the $(2,2,V)$-Steiner system that consists of every two-element subset of $\calV=\set{1,\dotsc,V}$, such as the $(2,2,4)$-Steiner system whose incidence matrix is given in~\eqref{equation.6x16 B}.
When $V$ is even, this system is resolvable via the famous \textit{round-robin} schedule, which is sometimes used in tournament competitions, as it ensures that each competitor faces all others exactly once while letting the entire tournament be as quick as possible.
The resulting family of constant-amplitude Kirkman ETFs is inflexible, since the redundancy $N/M$ of any such frame is essentially two: $M=V(V-1)/2$, $N=V^2$.

Some of the constant-amplitude ETFs generated by these designs via Theorem~\ref{theorem.Kirkman construction} are new.
To be clear, when $V=2^{j+1}$, ETFs with these parameters are well-known~\cite{DingF07}, arising from McFarland difference sets in Abelian groups isomorphic to $\bbZ_2^{2j+2}$.
However, when $V$ is even but not a power of $2$, some of the resulting Kirkman ETFs do not arise from difference sets.
In particular, there does not exist a difference set of $M=45$ elements in an Abelian group of order $N=100$~\cite{JungnickelPS07}, and so the $(2,2,10)$-Round Robin Kirkman ETF is not harmonic.
In Section~\ref{section.Grey-Rankin}, we exploit these ideas to build new examples of \textit{real-valued} constant-amplitude ETFs provided there exists a Hadamard matrix of size $V$; this leads to new examples of (nonlinear) binary codes that achieve the Grey-Rankin bound.

\subsubsection{Kirkman's Schoolgirl Problem}

For any positive integer $V\equiv 3\bmod 6$, there exists a resolvable $(2,3,V)$-Steiner (triple) system~\cite{RayW71}.
The resulting Kirkman ETFs have an approximate redundancy of $3$, consisting of $V(V+1)/2$ vectors in a space of dimension $V(V-1)/6$.
At least some of these frames are new constant-amplitude ETFs: when $V=15$, for example (Kirkman's original problem), the resulting ETF consists of $120$ vectors in a space of dimension $35$, and there does not exist an Abelian difference set with those paramters~\cite{JungnickelPS07}.

\subsubsection{Three-dimensional projective geometries}

For any prime power $q$, a resolvable $(2,q+1,q^3+q^2+q+1)$-Steiner system exists~\cite{Lorimer74}.
The resulting family of Kirkman ETFs is inflexible: though both $M=(q^2+1)(q^2+q+1)$ and $N=(q^2+q+2)(q^3+q^2+q+1)$ can grow arbitrarily large, the single parameter $q$ determines both the size and redundancy of such frames.
Note that when $q=2$ this design is a $(2,3,15)$-Steiner triple system which, as noted above, generates an ETF which is not harmonic.
As such, at least some of these frames are new constant-amplitude ETFs.

To be clear, projective geometries generate a flexible family of \textit{Steiner} ETFs: for any $j\geq2$
the projective geometry with parameters $K=q+1$ and $V=(q^{j+1}-1)/(q-1)$ generates a Steiner ETF~\cite{FickusMT12} with dimensions
\begin{equation*}
M=\frac{(q^j-1)(q^{j+1}-1)}{(q+1)(q-1)^2},
\quad
N=\frac{q^{j+1}-1}{q-1}\biggparen{\frac{q^{j}-1}{q-1}-1},
\end{equation*}
and varying $q$ and $j$ independently generates ETFs of various sizes and redundancies.
However, we could only find references to projective geometries being resolvable---and thus able to generate Kirkman ETFs---in special cases, like here where $j=3$.
Note that in order to be resolvable we need $K$ to divide $V$ which requires $j$ to be odd; it seems to be an open question whether such systems are resolvable for odd $j\geq 5$.

\subsubsection{Unitals}

For any prime power $q$, there exists a resolvable $(2,q+1,q^3+1)$-Steiner system~\cite{Bose58}.
This family, like the last, is inflexible, since $q$ determines both $M=q^2(q^2-q+1)$ and $N=(q^2+1)(q^3+1)$.
At least some of these are new constant-amplitude ETFs: taking $q=3$ yields a constant-amplitude ETF of $280$ vectors in a space of dimension $63$ which, as noted earlier, is not in the literature.
We did not find any examples of any such ETFs which are provably not harmonic;
note that whenever $q$ is odd, $N$ is even and so the conditions of the Bruck-Ryser-Chowla Theorem are automatically satisfied.

\subsection{Existence results for Kirkman ETFs}

A remarkable fact about resolvable Steiner systems is that they are known to exist asymptotically.
That is, for any $K\geq 2$ it is know that there exists a positive integer $V_0(K)$ such that for any $V\geq V_0(K)$ which satisfies $V\equiv K\bmod K(K-1)$, there exists a resolvable $(2,K,V)$-Steiner system~\cite{RayW73}.
Thus, for any such $V$, we know there exists a constant-amplitude ETF of $N=V(V-K+2)/(K-1)$ vectors in a space of dimension $M=V(V-1)/[K(K-1)]$.
In particular, for any $K\geq 2$, there exist constant-amplitude ETFs whose redundancy $N/M$ is arbitrarily close to $K$.
In contrast, all known examples of harmonic ETFs have redundancies which are essentially a power of a prime.
Unfortunately, the methods used to demonstrate the existence of these systems are not constructive in any practical sense.
Nevertheless, these existence results encourage the search for explicit constructions of constant-amplitude ETFs with arbitrary redundancies.

\section{Connecting Kirkman ETFs and harmonic ETFs}
\label{section.Kirkman vs harmonic}

In this section, we show that an important class of harmonic ETFs, namely those generated by McFarland difference sets, can also be constructed as Kirkman ETFs via Theorem~\ref{theorem.Kirkman construction}.
As a corollary, we find that harmonic ETFs from this particular family have Steiner representations, making them less desirable for certain compressed-sensing-related applications.
This result also demonstrates how truly rare it is to discover a new flexible family of ETFs.

To be precise, as noted in the introduction, there are only three known approaches for constructing nontrivial ETFs: via conference matrices~\cite{StrohmerH03}, difference sets~\cite{Strohmer08,XiaZG05,DingF07} and Steiner systems~\cite{FickusMT12}.
In the previous section, we refined the Steiner-based approach so as to produce constant-amplitude Kirkman ETFs.
Moreover, nearly all of the particular instances of these constructions are inflexible.
Indeed, ETFs generated by conference matrices have redundancy $\frac NM=2$;
surveying~\cite{XiaZG05,DingF07} as well as a comprehensive list of Abelian difference sets~\cite{JungnickelPS07}, we see that harmonic ETFs generated by Paley, Hadamard, twin prime power and Davis-Jedwab-Chen difference sets all have an approximate redundancy of $2$, while those generated by other cyclotomic or Spence difference sets have approximate redundancies of either $3$, $4$ or $8$;
as noted earlier, families of Steiner systems with a fixed $K$ yield ETFs whose approximate redundancy is $K$, and those produced from unital designs are also inflexible.

There are only five known flexible families of ETFs: harmonic ETFs arising from (i) Singer difference sets~\cite{XiaZG05} and (ii) McFarland difference sets~\cite{DingF07}, and Steiner ETFs arising from (iii) affine geometries, (iv) Denniston designs and (v) projective geometries~\cite{FickusMT12}.
Also, as shown in the previous section, classes (iii) and (iv) arise from resolvable Steiner systems, meaning we can apply a unitary operator to them to produce constant-amplitude ETFs.
In this section, we show that modulo such unitaries, (ii) is a special case of (iii).
That is, we show that in truth there are only four known flexible families of ETFs;
three of these four, namely (i), (ii/iii) and (iv), have constant-amplitude representations;
another three of these four, namely (ii/iii), (iv) and (v), have very sparse representations.

To show that every McFarland harmonic ETF is a special example of an affine Kirkman ETF, let $j\geq1$, let $q$ be a prime power, and consider $\calG\times\calV$ where $\calG$ is any Abelian group of order $(q^{j+1}-1)/(q-1)+1$ and $\calV$ is the additive group of finite field $\bbF_{q^{j+1}}$.
We form a harmonic ETF over this group via the approach of~\cite{DingF07} by letting $\calD$ be a McFarland difference set in $\calG\times\calV$.
McFarland's approach~\cite{McFarland73} is clever, and is eerily similar to Goethals and Seidel's method for constructing strongly regular graphs from Steiner systems~\cite{GoethalsS70}; though made independently, we show these two approach are related, since both can lead to the same ETFs.

The first step in forming a McFarland difference set is to parametrize the distinct hyperplanes in $\bbF_{q^{j+1}}$, regarded as the $(j+1)$-dimensional vector space $\bbF_q^{j+1}$ over the field $\bbF_q$.
Note that each hyperplane is the null space of some nontrivial linear functional over $\bbF_{q}^{j+1}$.
Moreover, there are $q^{j+1}-1$ such functionals, since each can be uniquely represented by a nonzero $1\times(j+1)$ matrix.
Also, any two such null spaces are equal precisely when their corresponding functionals are nonzero scalar multiples of each other.
As such, there are $R:=(q^{j+1}-1)/(q-1)$ distinct hyperplanes overall.

To find an explicit expression for these hyperplanes, regard $\bbF_{q^{j+1}}$ as an extension of $\bbF_q$, and let $\tr_{q^j/q}:\bbF_{q^{j+1}}\rightarrow\bbF_q$ be the associated \textit{field trace}, namely the sum of the automorphisms of $\bbF_{q^{j+1}}$ that fix $\bbF_q$.
Regarding $\bbF_{q^{j+1}}$ as the vector space $\bbF_q^{j+1}$, it is well known that this trace is a nontrivial linear functional, and so its null space
\begin{equation}
\label{equation.McFarland vs affine 0}
\calS
=\set{v\in\bbF_{q^{j+1}}: \tr_{q^j/q}(v)=0}
\end{equation}
is one example of a hyperplane in $\bbF_q^{j+1}$, and thus has cardinality $S=q^j$.
To find the remaining hyperplanes, let $\gamma$ be a primitive element of $\bbF_{q^{j+1}}$, meaning $\gamma$ generates its cyclic multiplicative group.
Since the mappings $v\mapsto\tr_{q^j/q}(\gamma^{-r}v)$ are distinct for every $r=0,\dotsc,q^{j+1}-1$, every nontrivial linear functional on $\bbF_q^{j+1}$ can be represented this way.
Moreover, since the nonzero elements of $\bbF_q$ form a $(q-1)$-element subgroup of this multiplicative group of $\bbF_{q^{j+1}}$, these functionals are distinct, even modulo scalar multiplication, provided we restrict the exponent $r$ of $\gamma$ to the set $\set{0,\dotsc,R-1}$.
As such, the $R$ distinct hyperplanes of $\bbF_{q^{j+1}}$ can be written as the null spaces of the mappings $v\mapsto\tr_{q^j/q}(\gamma^{-r}v)$, that is, as $\set{\gamma^r\calS}_{r\in\calR}$ where $\calS$ is the canonical hyperplane~\eqref{equation.McFarland vs affine 0}.

These hyperplanes in hand, we are ready to construct a McFarland difference set $\calD$ in $\calG\times\calV$, where $\calG$ is any
Abelian group of order $R+1$ and $\calV$ is the additive group of $\bbF_{q^{j+1}}$.
To be precise, letting $\set{g_r}_{r=0}^{R}$ be any enumeration of $\calG$ and letting $\calR:=\set{0,\dotsc,R-1}$, McFarland~\cite{McFarland73} showed that
\begin{equation}
\label{equation.McFarland vs affine 1}
\calD
=\set{(g,v): \exists\,r\in\calR\text{ such that }g=g_r,\, v\in\gamma^r\calS}
\end{equation}
is a difference set in $\calG\times\calV$.
Moreover, as discussed in~\cite{DingF07}, these difference sets, like all Abelian difference sets, yield harmonic ETFs.
Our goal is to show that these McFarland harmonic ETFs can also be constructed by applying Theorem~\ref{theorem.Kirkman construction} in the special case where the underlying resolvable Steiner system is an affine geometry.
To accomplish this, we next find explicit expressions for the harmonic ETF that arises from~\eqref{equation.McFarland vs affine 1}.

To be precise, \cite{DingF07} gives that the restrictions of the characters of $\calG\times\calV$ to $\calD$, suitably normalized, form an ETF for $\bbC^{\calD}$.
Here, a \textit{character} of a finite Abelian group is a homomorphism from that group into the unit circle in the complex plane.
It is well known that the characters of any finite Abelian group form a unimodular orthogonal basis over that group, and moreover, that the characters of the direct product $\calG\times\calV$ are simply the tensor products of the characters of $\calG$ with those of the additive group of $\bbF_{q^{j+1}}$.

We denote the characters of $(R+1)$-element group $\calG$ as $\set{\bfchi_u}_{u=0}^{R}$.
To form the characters of $\calV$, usually called the \textit{additive characters} of $\bbF_{q^{j+1}}$, recall that $\bbF_{q^{j+1}}$ is an extension of $\bbF_q$, which in turn is an extension of its base field $\bbF_p=\langle1\rangle\cong\bbZ_p$, where the prime $p$ is the \textit{characteristic} of $\bbF_q$.
As such, we have another trace function $\tr_{q^j/p}:\bbF_{q^{j+1}}\rightarrow\bbF_p$.
Moreover, it is well known that the characters $\set{\bfe_v}_{v\in\calV}$ of $\calV$ can be formulated in terms of this trace: $\bfe_v(v'):=\exp(\frac{2\pi\rmi}P\tr_{q^j/p}(vv'))$ for all $v'\in\calV$.
Overall, for any $u=0,\dotsc,R+1$ and $v\in\calV$, we see that the $(u,v)$th character of $\calG\times\calV$ is the function $(g,v')\mapsto\bfchi_u(g)\bfe_v(v')$.

To form an ETF with the approach of~\cite{DingF07}, we restrict the domain of these characters to the difference set~\eqref{equation.McFarland vs affine 1}, and then normalize the resulting functions.
Note that since $\calD$ is parametrized in terms of $\calR$ and $\calS$ with $(g,v)=(g_r,\gamma^r s)$, we regard these restricted characters as functions over $\calR\times\calS$; for any $u=0,\dotsc,R+1$ and $v\in\calV$, consider $\bfpsi_{u,v}\in\bbC^{\calR\times\calS}$,
\begin{equation}
\label{equation.McFarland vs affine 2}
\bfpsi_{u,v}(r,s)
:=D^{-\frac12}\bfchi_u(g_r)\exp(\tfrac{2\pi\rmi}P\tr_{q^j/p}(v\gamma^r s)).
\end{equation}
Note that these ETFs have the exact dimensions~\eqref{equation.affine Kirkman parameters} of Kirkman ETFs generated via affine geometries:
they consist of $N=\abs{\calG\times\calV}=V(R+1)=q^{j+1}[(q^{j+1}-1)/(q-1)+1]$ vectors in a space of dimension $M=\abs{\calR\times\calS}=SR=q^j(q^{j+1}-1)/(q-1)$.
Noting the similarity between the formula for these restricted characters~\eqref{equation.McFarland vs affine 2} and the formula~\eqref{equation.definition of Kirkman ETF} of the Kirkman ETFs from Theorem~\ref{theorem.Kirkman construction},
it becomes even more reasonable that the two types of ETFs are, in fact, the same.

To formally make this identification, we construct these same vectors $\set{\bfpsi_{u,v}}$ via the approach of Theorem~\ref{theorem.Kirkman construction}.
In particular, we show these vectors arise from an affine geometry over the finite field $\bbF_q$ which, as stated in the previous section has $K=q$ and $V=q^{j+1}$.
Indeed, note that the vectors of any Kirkman ETF generated via such a system lie in a space of dimension $B=VR/K=q^j(q^{j+1}-1)/(q-1)=D$,
and so the normalization factors in both~\eqref{equation.McFarland vs affine 2} and~\eqref{equation.definition of Kirkman ETF} are identical.

Moreover, the fact that $\set{f_u}_{u=0}^R$, $f_u(r):=\chi_u(g_r)$ forms a unimodular simplex in $\bbC^\calR$ follows from the fact that the characters $\set{\chi_u}_{u=0}^R$ form a unimodular orthogonal basis in $\bbC^\calG$ where $\calG=\set{g_r}_{r=0}^R=\set{g_r}_{r\in\calR}\cup\set{g_R}$.
Indeed, for any $u$ and $r$ we have $\abs{f_u(r)}=\abs{\chi_u(g_r)}=1$ and for any $u\neq u'$,
\begin{align*}
\ip{f_{u}}{f_{u'}}
&=\sum_{r\in\calR}f_u(r)[f_{u'}(r)]^*\\
&=\sum_{g\in\calG}\chi_u(g)[\chi_{u'}(g)]^*-\chi_u(g_R)[\chi_{u'}(g_R)]^*\\
&=-\chi_u(g_R)[\chi_{u'}(g_R)]^*,
\end{align*}
and so $\abs{\ip{f_{u}}{f_{u'}}}=\abs{\chi_u(g_R)}\abs{\chi_{u'}(g_R)}=1$, as needed.

As such, in order to show that harmonic ETFs constructed via McFarland difference sets~\eqref{equation.McFarland vs affine 2} can be constructed as Kirkman ETFs~\eqref{equation.definition of Kirkman ETF}, we need to show that the restricted additive characters $\exp(\tfrac{2\pi\rmi}P\tr_{q^j/p}(v\gamma^r s))$ can be written in the form $\bfh_{s(r,v)}(s)$.
Here, the key observation is that any element of $\calV=\bbF_{q^{j+1}}$ can be decomposed in terms of the canonical hyperplane~\eqref{equation.McFarland vs affine 0}.
Indeed, since $\tr_{q^j/q}$ is a nontrivial linear functional of the vector space $\bbF_{q^{j+1}}$ with respect to the field $\bbF_q$, there exists $\delta\in\bbF_{q^{j+1}}$ such that $\tr_{q^j/q}(\delta)=1$.
Moreover, since $\delta$ lies outside of the hyperplane $\calS$, we can tack $\delta$ onto a basis for $\calS$ to form a basis for $\bbF_{q^{j+1}}$.
As such, any element of $\calV$ can be uniquely written as $s+t\delta$ where $s\in\calS$ and $t\in\bbF_q$.
In particular, for any $r\in\calR$ and $v\in\calV$, there exists a unique $s(r,v)\in\calS$ and $t(r,v)\in\bbF_q$ such that
\begin{equation}
\label{equation.McFarland vs affine 3}
v\gamma^r\delta
=s(r,v)+t(r,v)\delta.
\end{equation}
Note that for any $s\in\calS$, multiplying~\eqref{equation.McFarland vs affine 3} by $s\delta^{-1}$ and then applying the linear functional $\tr_{q^j/p}$ yields:
\begin{align}
\notag
\tr_{q^j/p}(v\gamma^r s)
&=\tr_{q^j/p}(s(r,v)s\delta^{-1}+t(r,v)s)\\
\label{equation.McFarland vs affine 4}
&=\tr_{q^j/p}(s(r,v)s\delta^{-1})+t(r,v)\tr_{q^j/p}(s).
\end{align}
At this point, we introduce a third trace $\tr_{q/p}:\bbF_q\rightarrow\bbF_p$ which complements the other two.
Recalling $\bbF_p\subseteq\bbF_q\subseteq\bbF_{q^j}$,
it is well known that these three traces satisfy the nice property that $\tr_{q^j/p}=\tr_{q/p}\circ\tr_{q^j/q}$.
In particular, recalling~\eqref{equation.McFarland vs affine 0} and the linearity of the trace, any $s\in\calS$ satisfies
\begin{equation}
\label{equation.McFarland vs affine 5}
\tr_{q^j/p}(s)
=\tr_{q/p}(\tr_{q^j/q}(s))
=\tr_{q/p}(0)
=0.
\end{equation}
As such,~\eqref{equation.McFarland vs affine 4} reduces to $\tr_{q^j/p}(v\gamma^r s)=\tr_{q^j/p}(s(r,v)s\delta^{-1})$,
implying that the formula~\eqref{equation.McFarland vs affine 2} for our McFarland ETF can be rewritten as
\begin{equation*}
\bfpsi_{u,v}(r,s)
=D^{-\frac12}\bfchi_u(g_r)\exp(\tfrac{2\pi\rmi}P\tr_{q^j/p}(s(r,v)s\delta^{-1})).
\end{equation*}
Comparing this to~\eqref{equation.definition of Kirkman ETF},
showing that this McFarland harmonic ETF is a Kirkman ETF boils down to showing two claims: (i) that $\set{\bfh_{s'}}_{s'\in\calS}$, $\bfh_{s'}(s):=\exp(\tfrac{2\pi\rmi}P\tr_{q^j/p}(s's\delta^{-1}))$ is a unimodular orthogonal basis for $\bbC^\calS$ and (ii) that the means of identifying $s(r,v)$ from a given $r$ and $v$ according to~\eqref{equation.McFarland vs affine 3} corresponds to a resovable Steiner system.

The truth of the first claim arises from the orthogonality of the additive characters of $\bbF_{q^{j+1}}$.
Indeed, for any $s'\neq s''$, the fact that $(s'-s'')\delta^{-1}\neq0$ gives that
\begin{equation}
\label{equation.McFarland vs affine 7}
0
=\sum_{v\in\calV}\exp(\tfrac{2\pi\rmi}P\tr_{q^j/p}((s'-s'')v\delta^{-1})).
\end{equation}
Decomposing any $v\in\calV$ as $v=s+t\delta$ where $s\in\calS$, $t\in\bbF_q$ and then using~\eqref{equation.McFarland vs affine 5} in the case where ``$s$" is $s'-s''\in\calS$  gives
\begin{equation*}
\tr_{q^j/p}((s'-s'')v\delta^{-1})
=\tr_{q^j/p}((s'-s'')s\delta^{-1}).
\end{equation*}
Substituting this into~\eqref{equation.McFarland vs affine 7} then gives our first claim:
\begin{align*}
0
&=\sum_{s\in\calS}\sum_{t\in\bbF_q}\exp(\tfrac{2\pi\rmi}P\tr_{q^j/p}((s'-s'')s\delta^{-1})\\
&=q\ip{\bfh_{s'}}{\bfh_{s''}}.
\end{align*}

For the second claim, we let~\eqref{equation.McFarland vs affine 3} \textit{define} a block design.
To be precise, let $\calB=\set{b_{r,s}}_{(r,s)\in\calR\times\calS}$ be a set of subsets of $\calV=\bbF_{q^{j+1}}$ where for any $r=0,\dotsc,R-1$ and $s\in\calS$ we say that $v\in b_{r,s}$ if and only if there exists a $t\in\bbF_q$ such that $v\gamma^r\delta=s+t\delta$; solving for $v$ reveals the $(r,s)$ block to be
\begin{equation}
\label{equation.McFarland vs affine 8}
b_{r,s}
=\set{s\gamma^{-r}\delta^{-1}+t\gamma^{-r}: t\in\bbF_q}.
\end{equation}
Recalling that for any element of $\bbF_{q^{j+1}}$ there exists exactly one $s\in\calS$ and $t\in\bbF_q$ so that it can be written as $s+t\delta$, we see that for any fixed $r\in\calR$, there exists exactly one $s=s(r,v)$ such that $v\in b_{r,s}$.
As such, every $v\in\calV$ is contained in exactly $R$ blocks and moreover, for any fixed $r\in\calR$, $\calB_r=\set{b_{r,s}}_{s\in\calS}$ forms a partition of $\calV$.
Also, every block $b_{r,s}$ contains the same number of points, namely the $K=q$ points that arise from the various choices of $t$.

Thus, in order to see that $\calB$ is a resolvable $(2,K,V)$-Steiner system over $\calV$, all that remains to be shown is that any two distinct $v,v'\in\calV$ determine a unique block.
This gets to the heart of an affine geometry over a finite field:
the blocks are affine lines~\eqref{equation.McFarland vs affine 8}, which are determined by a nonzero \textit{direction vector} $\gamma^{-r}$, which is only unique up to nonzero scalar multiples, along with an \textit{initial point} $s\gamma^{-r}\delta^{-1}$ which lies in some hyperplane.
Any two distinct points $v,v'$ determine a direction $v'-v\neq0$;
since $\set{\gamma^r}_{r=0}^{R-1}$ represent every nonzero element of $\bbF_{q^{j+1}}$ modulo scalar multiplication, we know there exists a unique $r_0\in\calR$ and $t_0\in\bbF_q$ such that $v'-v=t_0\gamma^{-r_0}$.
Moreover, for this particular $r$, we know there exists unique $s,s'\in\calS$ and $t,t'\in\bbF_q$ such that $v=s\gamma^{-r_0}\delta^{-1}+t\gamma^{-r_0}$ and $v'=s'\gamma^{-r_0}\delta^{-1}+t'\gamma^{-r_0}$, respectively.
Combining these facts gives
\begin{align*}
v'
&=(v'-v)+v\\
&=t_0\gamma^{-r_0}+(s\gamma^{-r_0}\delta^{-1}+t\gamma^{-r_0})\\
&=s\gamma^{-r_0}\delta^{-1}+(t+t_0)\gamma^{-r_0},
\end{align*}
at which point the uniqueness gives $s'=s$ and $t'=t+t_0$.
Since $s'=s$, these two points $v$ and $v'$ are contained in the same block $b_{r,s}=b_{r,s'}$.
Also, this common block is unique: if $v,v'\in b_{r,s}$ are distinct, then $v'-v$ uniquely determines $r$; knowing $v$ and $r$, $s$ is always uniquely determined.
We summarize these results in the following theorem, which is the second main result of this paper.

\begin{theorem}
\label{theorem.McFarland is affine}
Let $j\geq 1$ and let $q$ be a power of a prime $p$.
Let $\calR=\set{0,\dotsc,R-1}$ where $R=(q^{j+1}-1)/(q-1)$, and let $\calS$ be the hyperplane~\eqref{equation.McFarland vs affine 0}.
Let $\set{\chi_u}_{u=0}^{R}$ be the characters of an Abelian group $\calG=\set{g_r}_{r=0}^{R}$ and let $\gamma$ be a primitive element of $\bbF_{q^{j+1}}$, whose additive group is denoted $\calV$.

Then the harmonic ETF generated by the McFarland difference set~\eqref{equation.McFarland vs affine 1},
namely the vectors $\set{\bfpsi_{u,v}}_{u=0,v\in\calV}^{R-1}\subseteq\bbC^{\calR\times\calS}$ given in~\eqref{equation.McFarland vs affine 2},
is an example of a Kirkman ETF constructed by Theorem~\ref{theorem.Kirkman construction}.
To be precise, taking any $\delta\in\bbF_{q^{j+1}}$ such that $\tr_{q^j/q}(\delta)=1$, the blocks $\calB=\set{b_{r,s}}_{r=0,s\in\calS}^{R-1}$ defined in~\eqref{equation.McFarland vs affine 8} are a resolvable $(2,q,q^{j+1})$-Steiner system (affine geometry) which generates this same ETF, provided we let $\bff_u(r)=\chi_u(g_r)$ and $\bfh_{s'}(s):=\exp(\tfrac{2\pi\rmi}P\tr_{q^j/p}(s's\delta^{-1}))$.
\end{theorem}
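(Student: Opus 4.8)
The plan is to show that the McFarland harmonic ETF \eqref{equation.McFarland vs affine 2} coincides entry-by-entry with a Kirkman ETF \eqref{equation.definition of Kirkman ETF} built from three prescribed ingredients: the affine geometry with blocks \eqref{equation.McFarland vs affine 8}, the simplex $\bff_u(r)=\chi_u(g_r)$, and the basis $\bfh_{s'}(s)=\exp(\tfrac{2\pi\rmi}P\tr_{q^j/p}(s's\delta^{-1}))$. Since both formulas carry the identical normalization $B^{-\frac12}=D^{-\frac12}$ (because $B=VR/K=q^j(q^{j+1}-1)/(q-1)=D$) and the identical leading factor $\chi_u(g_r)=\bff_u(r)$, the whole identification collapses to rewriting the restricted additive character $\exp(\tfrac{2\pi\rmi}P\tr_{q^j/p}(v\gamma^r s))$ in the separated form $\bfh_{s(r,v)}(s)$ demanded by \eqref{equation.definition of Kirkman ETF}, and then checking that the three ingredients satisfy the hypotheses of Theorem \ref{theorem.Kirkman construction}.

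First I would establish the algebraic heart of the argument. Fixing $\delta$ with $\tr_{q^j/q}(\delta)=1$, I decompose each $v\gamma^r\delta\in\bbF_{q^{j+1}}$ uniquely along the hyperplane $\calS$ of \eqref{equation.McFarland vs affine 0} as $s(r,v)+t(r,v)\delta$, which is precisely what defines the map $s(r,v)$. Multiplying by $s\delta^{-1}$, applying $\tr_{q^j/p}$, and invoking the tower relation $\tr_{q^j/p}=\tr_{q/p}\circ\tr_{q^j/q}$ together with the vanishing \eqref{equation.McFarland vs affine 5} of $\tr_{q^j/p}$ on $\calS$, the cross term disappears and the exponent reduces to $\tr_{q^j/p}(v\gamma^r s)=\tr_{q^j/p}(s(r,v)s\delta^{-1})$. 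This is exactly the exponent in $\bfh_{s(r,v)}(s)$, so the two entrywise formulas agree once the ingredients are validated.

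Next I would verify the two structural hypotheses of Theorem \ref{theorem.Kirkman construction}. That $\set{\bff_u}$ is a unimodular regular simplex follows from the characters $\set{\chi_u}_{u=0}^R$ forming a unimodular orthogonal basis of $\bbC^\calG$: deleting the single coordinate $g_R$ leaves $R+1$ unit vectors in $\bbC^\calR$ all of whose pairwise inner products have modulus one. That $\set{\bfh_{s'}}_{s'\in\calS}$ is a unimodular orthogonal basis follows from orthogonality of the additive characters of $\bbF_{q^{j+1}}$: for $s'\neq s''$ I would sum $\exp(\tfrac{2\pi\rmi}P\tr_{q^j/p}((s'-s'')v\delta^{-1}))$ over $v\in\calV$, decompose $v=s+t\delta$, and use \eqref{equation.McFarland vs affine 5} to peel off the $q$ identical copies indexed by $t$, forcing $\ip{\bfh_{s'}}{\bfh_{s''}}=0$.

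The main obstacle is the combinatorial claim that the sets \eqref{equation.McFarland vs affine 8} genuinely form a resolvable $(2,q,q^{j+1})$-Steiner system. Counting shows each block has $K=q$ points, and for fixed $r$ the uniqueness of the $s+t\delta$ decomposition makes $\set{b_{r,s}}_{s\in\calS}$ a partition of $\calV$, giving resolvability with every point in exactly $R$ blocks. The delicate part is the Steiner property that any two distinct $v,v'$ lie in a unique block, and this is where the affine-line viewpoint is essential: because $\set{\gamma^r}_{r=0}^{R-1}$ represent every nonzero element of $\bbF_{q^{j+1}}$ modulo the scalar subgroup $\bbF_q^\times$, the direction $v'-v$ pins down a unique $r_0$, after which the decomposition of $v$ pins down a unique $s$. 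I would then confirm that $v'$ lies in that same block $b_{r_0,s}$ and that no other block can contain both points, completing the Steiner verification and hence the identification.
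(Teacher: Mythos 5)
Your proposal is correct and follows essentially the same route as the paper: the identical normalization $B=D$, the reduction $\tr_{q^j/p}(v\gamma^r s)=\tr_{q^j/p}(s(r,v)s\delta^{-1})$ via the decomposition $v\gamma^r\delta=s(r,v)+t(r,v)\delta$ and the trace tower property, the simplex and orthogonal-basis verifications from character orthogonality, and the affine-line argument (direction determines $r_0$, then $s$) for the resolvable Steiner property. The only nitpick is that the vectors $\bff_u$ restricted to $\calR$ are unimodular of norm $\sqrt{R}$ rather than unit vectors, but the property you actually invoke---pairwise inner products of modulus one---is the right one and is exactly what the paper establishes.
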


We emphasize that this result tells us nothing new about the existence of ETFs.
Rather, the significance of Theorem~\ref{theorem.McFarland is affine} is that it provides (sparse) Steiner ETF representations for one of the only two known flexible classes of harmonic ETFs; as we now describe, this has ramifications on the use of such ETFs for compressed sensing.

\subsection{Kirkman ETFs and the Restricted Isometry Property}

Given $L\leq M\leq N$ and $\delta<1$, the vectors $\set{\bfphi_n}_{n\in\calN}$ in $\bbC^\calM$ have the $(L,\delta)$-\textit{Restricted Isometry Property} (RIP) if for any $L$-element subset $\calL$ of $\calN$, the eigenvalues of the Gram matrix of $\set{\bfphi_n}_{n\in\calL}$  lie in $[1-\delta,1+\delta]$.
That is, for all such $\calL$, we want $\norm{\bfPhi_\calL^*\bfPhi_\calL^{}-\bfI}_2\leq\delta$ where $\bfPhi_\calL \bfy:=\sum_{n\in\calL}\bfy(n)\bfphi_n$ is the corresponding restricted synthesis operator.
In essence, an $(L,\delta)$-RIP matrix $\bfPhi$ has the property that any $L$-element subset of its columns are nearly orthonormal.

Though other paradigms exist, this property is undeniably central to compressed sensing.
For a given $M$ and $N$, the goal is to design $\set{\bfphi_n}_{n\in\calN}$ so that it is $(L,\delta)$-RIP for $L$ being as large as possible, subject to the constraint that $\delta$ is sufficiently small compared to $1$.
To date, the most successful examples of such matrices are given by random matrix theory;
such random constructions typically yield matrices $\bfPhi$ that, with high probability, are $(L,\delta)$-RIP for $L$ on the order of $M/\operatorname{polylog}(N)$.
This is in stark contrast to nearly all deterministic constructions of such matrices which, with the exception of~\cite{Bourgain11}, are only provably $(L,\delta)$-RIP for $L$ on the order of $M^\frac12$.
In the compressed sensing literature, this is known as the \textit{square-root bottleneck}.
These facts are common knowledge, and are more thoroughly explained in~\cite{BandeiraFMW13}.

For most deterministic constructions, it is unknown whether this bottleneck is due to a lack of good proof techniques or more seriously, is due to a fault in the construction itself.
To be precise, the Gershgorin Circle Theorem gives
\begin{equation}
\label{equation.compressed sensing 1}
\norm{\bfPhi_\calL^*\bfPhi_\calL^{}-\bfI}_2
\leq\max_{n\in\calL}\sum_{\substack{n'\in\calL\\n'\neq n}}\abs{\ip{\bfphi_n}{\bfphi_{n'}}}
\leq(L-1)\mu,
\end{equation}
where $\mu$ is the coherence~\eqref{equation.definition of coherence} of $\set{\bfphi_n}_{n\in\calN}$.
To use this fact to prove that $\set{\bfphi_n}_{n\in\calN}$ is $(L,\delta)$-RIP, we thus want to choose $L$ such that $(L-1)\mu\leq\delta<1$.
In the case where $\set{\bfphi_n}_{n\in\calN}$ is a sequence of unit vectors with redundancy $\rho:=N/M$, the Welch bound~\eqref{equation.Welch bound} then yields the bottleneck:
\begin{equation}
\label{equation.compressed sensing 2}
L-1
\leq\tfrac{\delta}{\mu}
<\bigparen{\tfrac{M(N-1)}{N-M}}^\frac12
=\bigparen{\tfrac{\rho M-1}{\rho-1}}^\frac12
\leq\bigparen{\tfrac{\rho}{\rho-1}}^\frac12 M^\frac12.
\end{equation}
As such, in order to push beyond this bottleneck, we need to first find vectors for which the bounds in~\eqref{equation.compressed sensing 1} are too coarse, and then find a better way for estimating the eigenvalues of the resulting submatrices.
These are hard problems since the Gershgorin Circle Theorem, though easily proven, yields bounds which are surprisingly sharp.

Indeed, the bounds in~\eqref{equation.compressed sensing 1} are good in the case where $\set{\bfphi_n}_{n\in\calN}$ is a Steiner ETF.
To see this, recall from Section~\ref{section.Kirkman ETFs} that the Welch bound of any such ETF is $1/R$ and so~\eqref{equation.compressed sensing 2} becomes $L-1<R$.
That is, for any $L\leq R$, there exists $\delta<1$ such that $\set{\bfphi_n}_{n\in\calN}$ is $(L,\delta)$-RIP.
Remarkably, the converse of this fact is also true.
To elaborate, note that if $\set{\bfphi_n}_{n\in\calN}$ is $(L,\delta)$-RIP for some fixed $L\leq M$ and $\delta<1$, then at the very least, any $L$-element subset of $\set{\bfphi_n}_{n\in\calN}$ is linearly independent.
This means its \textit{spark}---the number of vectors in its smallest linearly dependent subcollection---is at least $L+1$.
However, as noted in~\cite{FickusMT12}, the spark of any Steiner ETF is at most $R+1$.
Indeed, in the special case where the underlying Steiner system is resolvable, note that for any fixed $v\in\calV$, the subcollection  $\set{\bfphi_{u,v}}_{u=0}^{R}$ of the Steiner ETF~\eqref{equation.definition of resolvable Steiner ETF} defined in Theorem~\ref{theorem.Kirkman construction} is only supported over the indices $(r,s)$ of those blocks $b_{r,s}$ which contain $v$.
Since there are $R+1$ such vectors but only $R$ such blocks, these vectors are necessarily linearly dependent.
As such, if $\set{\bfphi_n}_{n\in\calN}$ is $(L,\delta)$-RIP for some $\delta<1$ then $L+1\leq\operatorname{spark}(\set{\bfphi_n}_{n\in\calN})\leq R+1$.
In summary, a Steiner ETF is $(L,\delta)$-RIP for some $\delta<1$ if and only if $L\leq R$.

We now combine this fact with the main results of this section and the previous one to prove, for the first time, that some harmonic ETFs are not good RIP matrices.
Indeed, Theorem~\ref{theorem.McFarland is affine} states that every ETF generated by a McFarland difference set---one of only two known flexible constructions of harmonic ETFs---is, in fact, a Kirkman ETF.
Moreover, Theorem~\ref{theorem.Kirkman construction} states that any Kirkman ETF can be obtained by applying a unitary transformation to a Steiner ETF; it is well known that such transforms preserve RIP.
Together, we have:
\begin{corollary}
\label{corollary.Steiner RIP}
If $\set{\bfphi_n}_{n\in\calN}$ is any Steiner or Kirkman ETF for $\bbC^\calM$, then it is $(L,\delta)$-RIP for some $\delta<1$ if and only if
\begin{equation*}
L\leq R=\bigparen{\tfrac{\rho M-1}{\rho-1}}^\frac12,
\end{equation*}
where $\rho=\frac NM$.
In particular, it is impossible to surpass the square-root bottleneck using harmonic ETFs generated from McFarland difference sets.
\end{corollary}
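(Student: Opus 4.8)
The plan is to assemble three facts that are already in place---the exact RIP characterization for Steiner ETFs derived in the paragraph just above, the unitary equivalence of Theorem~\ref{theorem.Kirkman construction}, and the identification of Theorem~\ref{theorem.McFarland is affine}---rather than to prove anything genuinely new. I would organize the argument around these three steps.

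First I would record the Steiner case, which is essentially the discussion preceding the statement. For sufficiency, recall that a Steiner ETF has coherence $\mu = 1/R$; feeding this into~\eqref{equation.compressed sensing 1} shows that for any $L \leq R$ the choice $\delta = (R-1)/R < 1$ makes the matrix $(L,\delta)$-RIP, and reading~\eqref{equation.compressed sensing 2} with $N = \rho M$ identifies this threshold as $R = (\tfrac{\rho M - 1}{\rho - 1})^{\frac12}$. For necessity I would invoke the spark bound: for any fixed $v \in \calV$, the $R+1$ vectors $\set{\bfphi_{u,v}}_{u=0}^{R}$ of~\eqref{equation.definition of resolvable Steiner ETF} are supported only on the $R$ blocks containing $v$ and so are linearly dependent, whence the spark is at most $R+1$; since $(L,\delta)$-RIP with $\delta<1$ forces every $L$-element subcollection to be independent, we obtain $L + 1 \leq R + 1$, i.e.\ $L \leq R$.

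Second I would transfer this characterization to Kirkman ETFs by unitary invariance. Theorem~\ref{theorem.Kirkman construction} writes the Kirkman synthesis matrix as $\bfPsi = \boldsymbol{U}\bfPhi$ for a unitary $\boldsymbol{U}$ and the Steiner matrix $\bfPhi$, so for every index set $\calL$ the restricted Gram matrices coincide: $\bfPsi_\calL^*\bfPsi_\calL^{} = \bfPhi_\calL^*\boldsymbol{U}^*\boldsymbol{U}\bfPhi_\calL^{} = \bfPhi_\calL^*\bfPhi_\calL^{}$. Hence $\norm{\bfPsi_\calL^*\bfPsi_\calL^{} - \bfI}_2 = \norm{\bfPhi_\calL^*\bfPhi_\calL^{} - \bfI}_2$ for all $\calL$, so the two frames have identical RIP constants and the Steiner characterization passes over verbatim. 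The McFarland conclusion is then immediate: Theorem~\ref{theorem.McFarland is affine} exhibits every McFarland harmonic ETF as an affine Kirkman ETF, which therefore obeys $L \leq R = (\tfrac{\rho M - 1}{\rho - 1})^{\frac12}$, a quantity of order $M^{\frac12}$, placing it below the square-root bottleneck.

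I do not expect a genuine obstacle, since the corollary synthesizes results already proven; the one point meriting care is the second step, where the tempting slogan ``unitaries preserve RIP'' should be replaced by the sharp observation above that the restricted Gram matrices are literally equal, so that the RIP constants are not merely comparable but identical for every $\calL$.
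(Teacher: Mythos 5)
Your proposal is correct and follows essentially the same route as the paper: sufficiency via the coherence $1/R$ and the Gershgorin bound~\eqref{equation.compressed sensing 1}--\eqref{equation.compressed sensing 2}, necessity via the spark bound $\operatorname{spark}\leq R+1$ from the $R+1$ vectors $\set{\bfphi_{u,v}}_{u=0}^{R}$ supported on only $R$ blocks, then transfer to Kirkman ETFs by Theorem~\ref{theorem.Kirkman construction} and to McFarland harmonic ETFs by Theorem~\ref{theorem.McFarland is affine}. Your only addition is making explicit the equality $\bfPsi_\calL^*\bfPsi_\calL^{}=\bfPhi_\calL^*\bfPhi_\calL^{}$ of restricted Gram matrices, which the paper leaves as the remark that unitaries preserve RIP; that is a welcome sharpening but not a different argument.
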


It remains an open problem whether or not there exists an ETF which is a good RIP matrix for values of $L$ which are larger than numbers on the order of $M^\frac12$.
However, in light of Corollary~\ref{corollary.Steiner RIP}, there is only one known flexible class of ETFs left to investigate, namely the harmonic ETFs generated by Singer difference sets.
These difference sets are cyclic, meaning these ETFs are obtained by extracting $M=(q^j-1)/(q-1)$ rows from a standard DFT matrix of size $N=(q^{j+1}-1)/(q-1)$ where $j\geq 2$ and $q$ is some prime power.
Here, there are some reasons for hope: when $j=2$, such ETFs are \textit{numerically erasure-robust frames} and as such, cannot be sparse in any basis~\cite{FickusM12}.
Moreover, Singer harmonic ETFs are \textit{full spark}---their spark is $M+1$---when $N$ is prime, such as when $j=2$ and $q=2$.
But even this can fail when $N$ is but a prime power, such as when $j=4$ and $q=3$~\cite{AlexeevCM12}.

Also, there are inflexible families of non-Steiner ETFs whose RIP characteristics bear further study.
One example of these are  \textit{Paley ETFs} which are constructed by modifying a quadratic-residue-based harmonic ETF into a redundancy-two ETF in the manner of~\cite{Renes07}.
There at least, spark is not the issue: any Paley ETF is $(L,\delta)$-RIP for all $L\leq M$ for some $\delta<1$~\cite{BandeiraFMW13}.
However, it is unknown how this $\delta$ behaves as a function of $L$, $M$ and $N$;
this is related to longstanding open problems regarding the clique numbers of Paley graphs~\cite{BandeiraFMW13}.
These problems are nontrivial, and it is much easier to prove that a given set of vectors is not $(L,\delta)$-RIP than to prove that it is.
Put simply, Corollary~\ref{corollary.Steiner RIP} does not tell you where to find good RIP matrices but rather, where not to look.

\section{Hadamard ETFs and the Grey-Rankin Bound}
\label{section.Grey-Rankin}

In this section, we apply the results of Sections~\ref{section.Kirkman ETFs} and~\ref{section.Kirkman vs harmonic} to produce new examples of certain types of optimal binary codes.
An \textit{$(M,N)$-binary code} is a set of $N$ codewords (vectors) in $\bbZ_2^M$,
that is, a sequence $\set{\bfc_n}_{n=1}^{N}$ of $M\times 1$ vectors whose entries lie in $\bbZ_2:=\set{0,1}$.
The \textit{distance} of such a code is the minimum pairwise Hamming distance between any two codewords,
namely $\dist(\set{\bfc_n}_{n=1}^{N}):=\min_{n\neq n'}\hd(\bfc_n,\bfc_{n'})$ where $\hd(\bfc,\bfc')$ counts the number of entries of $\bfc,\bfc'\in\bbZ_2^M$ that differ.
The \textit{Grey-Rankin bound}~\cite{Grey62} is an upper bound on the number of codewords $N$ one can have with a given distance $\Delta$ in a space with given dimension $M$.
We show that the Grey-Rankin bound is equivalent to a special case of the Welch bound, and then exploit this equivalence, using coding theory to prove new results in frame theory, and vice versa.

To be precise, the Grey-Rankin bound only applies to \textit{self-complementary codes}, that is, codes in which the \textit{complement} $(\bfc_n+\bfone)(m):=\bfc_n(m)+1\bmod2$ of any codeword $\bfc_n$ also lies in the code.
In the work that follows, it is convenient for us to regard the second half of these $(M,2N)$-binary codes as the complements of the first half, namely $\bfc_{n+N}=\bfc_n+\bfone$ for all $n=1,\dotsc,N$.
Denoting the distance of such a code as $\Delta$, the Grey-Rankin bound states:
\begin{equation}
\label{equation.Grey-Rankin}
2N\leq\tfrac{8\Delta(M-\Delta)}{M-(M-2\Delta)^2}
\end{equation}
provided the right-hand side is positive; since the self-complementarity of the code guarantees that $2\Delta\leq M$, this positivity is equivalent to having $2\Delta>M-M^\frac12$.

We rederive~\eqref{equation.Grey-Rankin} by applying the Welch bound~\eqref{equation.Welch bound} to frames whose entries are all $\pm M^{-\frac12}$.
To be clear, we can exponentiate any codeword $\bfc_n\in\bbZ_2^M$ to form a corresponding unit norm vector $\bfphi_n\in\bbR^M$, $\bfphi_n(m):=M^{-\frac12}(-1)^{\bfc_n(m)}$.
Under this identification, the Euclidean distance between any two of these real vectors can be written in terms of the Hamming distance between their corresponding codewords:
\begin{align*}
\norm{\bfphi_n-\bfphi_{n'}}^2
&=\tfrac1M\sum_{m=1}^{M}\abs{(-1)^{\bfc_n(m)}-(-1)^{\bfc_{n'}(m)}}^2\\
&=\tfrac1M\sum_{m=1}^{M}\left\{\begin{array}{cc}4,&\bfc_n(m)\neq \bfc_{n'}(m)\\0,&\bfc_n(m)=\bfc_{n'}(m)\end{array}\right\}\\
&=\tfrac4M\hd(\bfc_n,\bfc_{n'}).
\end{align*}
We also have $\norm{\bfphi_n-\bfphi_{n'}}^2=2(1-\ip{\bfphi_n}{\bfphi_{n'}})$, and solving for the inner product gives
\begin{equation}
\label{equation.inner products in term of Hamming}
\ip{\bfphi_n}{\bfphi_{n'}}
=\tfrac1{M}(M-2\hd(\bfc_n,\bfc_{n'})).
\end{equation}
Grey himself used this identification in his derivation of~\eqref{equation.Grey-Rankin}.
However, Grey's argument~\cite{Grey62} relies on prior work by Rankin~\cite{Rankin56} concerning the packing of spherical caps,
whereas we instead make use of Welch's bound~\eqref{equation.Welch bound}.

The mathematical novelty here is debatable: Rankin's work is a forerunner to Welch's bound.
In fact, a little simplification reveals Equation~26 of~\cite{Rankin56} to be equivalent to the real-variable version of the Welch bound;
since it predates~\cite{Welch74} by nearly two decades, one can argue that~\eqref{equation.Welch bound} should be called the ``Rankin-Welch" bound.
From this perspective, our work below serves to modernize Grey's original argument.
This itself has value: unlike Rankin's work, the Welch bound is widely studied.
Also, as seen from~\eqref{equation.derivation of Welch bound}, the Welch bound can be quickly proven from basic principles.
Most importantly, by using the Welch bound to streamline Grey's approach,  we allow the large body of existing ETF/WBE literature to be quickly and directly applied to open problems in coding theory.

Returning to the argument itself, taking the maximums of both sides of~\eqref{equation.inner products in term of Hamming} over all $n,n'=1,\dotsc, 2N$, $n\neq n'$ gives
\begin{equation}
\label{equation.Grey-Rankin derivation 1}
\max_{\substack{n,n'\in\set{1,\dotsc,2N}\\n\neq n'}}\ip{\bfphi_n}{\bfphi_{n'}}
=\tfrac{M-2\Delta}{M}.
\end{equation}
Moreover, the self-complementarity of the code $\set{\bfc_n}_{n=1}^{2N}$ gives that $\bfphi_{n+N}=-\bfphi_n$ for all $n=1,\dotsc,N$.
As such, the left-hand side of~\eqref{equation.Grey-Rankin derivation 1} can be rewritten as
\begin{equation}
\label{equation.Grey-Rankin derivation 2}
\mu
=\max_{\substack{n,n'\in\set{1,\dotsc,N}\\n\neq n'}}\abs{\ip{\bfphi_n}{\bfphi_{n'}}}
=\tfrac{M-2\Delta}{M}.
\end{equation}
At this point, the Welch bound~\eqref{equation.Welch bound} gives
\begin{equation}
\label{equation.Grey-Rankin derivation 3}
\bigparen{\tfrac{N-M}{M(N-1)}}^\frac12
\leq \tfrac{M-2\Delta}{M}.
\end{equation}
Squaring both sides and then solving for $N$ then gives the Grey-Rankin bound~\eqref{equation.Grey-Rankin}.
Moreover, note that by this argument, we obtain equality in~\eqref{equation.Grey-Rankin} if and only if we have equality in \eqref{equation.Grey-Rankin derivation 3}, which in light of~\eqref{equation.Grey-Rankin derivation 2}, happens precisely when $\set{\bfphi_n}_{n=1}^{N}$ is a real-valued constant-amplitude ETF.
That is, every \textit{Grey-Rankin-bound-equality} (GRBE) code generates such an ETF.

Importantly, the converse is also true:
if $\set{\bfphi_n}_{n=1}^{N}$ is any constant-amplitude ETF for $\bbR^M$,
then we may build codewords $\set{\bfc_n}_{n=1}^{N}$ by letting $\bfc_n(m):=\log_{-1}(\sgn(\bfphi_n(m))\in\bbZ_2$ for all $m=1,\dotsc,M$ and all $n=1,\dotsc,N$.
We then extend this to a self-complementary code by letting $\bfc_{n+N}:=\bfc_n+\bfone$ for all $n=1,\dotsc,N$.
Since exponentiating these codewords produces our original ETF, we know that~\eqref{equation.Grey-Rankin derivation 3} holds with equality, meaning we also have equality in the Grey-Rankin bound~\eqref{equation.Grey-Rankin}.
For example, the real $6\times 16$ Kirkman ETF given in Figure~\ref{figure.6x16 Psi} yields the $6\times 32$ self-complementary code given in Figure~\ref{figure.6x32 C}.
\begin{figure*}
\begin{align*}
\left[\begin{array}{cccccccccccccccccccccccccccccccccccc}
0&1&0&1&0&1&0&1&0&1&0&1&0&1&0&1&1&0&1&0&1&0&1&0&1&0&1&0&1&0&1&0\\
0&1&0&1&0&1&0&1&1&0&1&0&1&0&1&0&1&0&1&0&1&0&1&0&0&1&0&1&0&1&0&1\\
0&0&1&1&0&0&1&1&0&0&1&1&0&0&1&1&1&1&0&0&1&1&0&0&1&1&0&0&1&1&0&0\\
0&0&1&1&1&1&0&0&0&0&1&1&1&1&0&0&1&1&0&0&0&0&1&1&1&1&0&0&0&0&1&1\\
0&1&1&0&0&1&1&0&0&1&1&0&0&1&1&0&1&0&0&1&1&0&0&1&1&0&0&1&1&0&0&1\\
0&1&1&0&1&0&0&1&1&0&0&1&0&1&1&0&1&0&0&1&0&1&1&0&0&1&1&0&1&0&0&1
\end{array}\right]
\end{align*}
\caption{\label{figure.6x32 C}
A $(6,32)$-binary code, the left half of which is obtained by converting the $+$'s and $-$'s of the ETF in Figure~\ref{figure.6x32 C} into $0$'s and $1$'s, respectively.  This code is self-complementary, meaning its right half is obtained by adding $1$ to the left half, modulo $2$.
This code achieves the Grey-Rankin bound for $M=6$ and $\Delta=2$, meaning it is the widest possible self-complementary matrix of height $6$ such that the Hamming distance of any two columns is at least $2$.
We show that such Grey-Rankin-bound-equality matrices are equivalent to real-valued constant-amplitude ETFs, and then exploit this result to prove new results about ETFs using coding theory, and vice versa.
}
\end{figure*}
We summarize these results as our third main result:
\begin{theorem}
\label{theorem.Grey-Rankin Welch bound}
Any $(M,2N)$-binary self-complementary code $\set{\bfc_n}_{n=1}^{2N}$ with $\bfc_{n+N}=\bfc_n+\bfone$ for all $n=1,\dotsc,N$ satisfies the Grey-Rankin bound~\eqref{equation.Grey-Rankin}.
Moreover, identifying $\set{\bfc_n}_{n=1}^{N}$ with constant-amplitude vectors $\set{\bfphi_n}_{n=1}^{N}\subseteq\bbR^M$ according to $\bfphi_n(m)=M^{-\frac12}(-1)^{\bfc_n(m)}$, the code $\set{\bfc_n}_{n=1}^{2N}$ achieves the Grey-Rankin bound~\eqref{equation.Grey-Rankin} if and only if $\set{\bfphi_n}_{n=1}^{N}$ is an ETF.
\end{theorem}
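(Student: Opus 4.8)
The plan is to transport the whole problem from the Hamming metric on $\bbZ_2^M$ into the Euclidean geometry of unit vectors in $\bbR^M$, where the Welch bound~\eqref{equation.Welch bound} is already available. First I would exponentiate each codeword by setting $\bfphi_n(m):=M^{-\frac12}(-1)^{\bfc_n(m)}$; since every entry has modulus $M^{-\frac12}$ and there are $M$ of them, each $\bfphi_n$ is a unit vector. Expanding $\norm{\bfphi_n-\bfphi_{n'}}^2$ coordinatewise shows that each coordinate contributes $\tfrac4M$ exactly when the two codewords disagree there, giving $\norm{\bfphi_n-\bfphi_{n'}}^2=\tfrac4M\hd(\bfc_n,\bfc_{n'})$. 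Combining this with the polarization identity $\norm{\bfphi_n-\bfphi_{n'}}^2=2(1-\ip{\bfphi_n}{\bfphi_{n'}})$ yields the linear dictionary $\ip{\bfphi_n}{\bfphi_{n'}}=\tfrac1M(M-2\hd(\bfc_n,\bfc_{n'}))$ between inner products and Hamming distances.

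The essential structural input is self-complementarity. Since $\bfc_{n+N}=\bfc_n+\bfone$ flips every bit, exponentiation gives $\bfphi_{n+N}=-\bfphi_n$. I would use this to collapse the signed maximum of inner products over all $2N$ vectors onto the coherence of just the first $N$: running through the cases, the off-diagonal signed inner products among $\set{\bfphi_n}_{n=1}^{2N}$ are exactly $\pm\ip{\bfphi_n}{\bfphi_m}$ for $n\neq m\le N$ together with the value $-1$ from complementary pairs, so their maximum equals $\mu=\max_{n\neq n'\le N}\abs{\ip{\bfphi_n}{\bfphi_{n'}}}$. Meanwhile, maximizing the dictionary above over all pairs identifies this same signed maximum with $\tfrac{M-2\Delta}M$, where $\Delta$ is the code's distance. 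Hence $\mu=\tfrac{M-2\Delta}M$, and feeding this into the Welch bound~\eqref{equation.Welch bound}, squaring, and solving the resulting inequality for $N$ reproduces the Grey-Rankin bound~\eqref{equation.Grey-Rankin}; this establishes the bound for every self-complementary code.

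For the equivalence I would track the equality case along this chain. Every step after the Welch bound is a reversible algebraic manipulation, so equality in~\eqref{equation.Grey-Rankin} holds if and only if $\set{\bfphi_n}_{n=1}^{N}$ meets the Welch bound with equality; by the equality analysis accompanying~\eqref{equation.derivation of Welch bound}, this happens exactly when those $N$ unit vectors form an ETF, which gives both directions of the stated iff within the fixed identification. To close the loop and exhibit a genuine bijection with constant-amplitude ETFs, I would also run the construction backwards: given any constant-amplitude real ETF, set $\bfc_n(m):=\log_{-1}\sgn(\bfphi_n(m))$ and extend self-complementarily by $\bfc_{n+N}:=\bfc_n+\bfone$; because constant amplitude forces $\abs{\bfphi_n(m)}=M^{-\frac12}$, exponentiating these codewords recovers the original ETF, so equality propagates back. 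The one point requiring care---more a matter of careful bookkeeping than a genuine obstacle---is the reduction of the $2N$-vector signed maximum to the $N$-vector coherence, since it is precisely self-complementarity, rather than equiangularity, that makes $\mu$ coincide with a plain maximum instead of forcing an absolute value.
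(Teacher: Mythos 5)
Your proposal is correct and follows essentially the same route as the paper's own proof: the same exponentiation dictionary $\ip{\bfphi_n}{\bfphi_{n'}}=\tfrac1M(M-2\hd(\bfc_n,\bfc_{n'}))$, the same use of self-complementarity ($\bfphi_{n+N}=-\bfphi_n$) to collapse the signed maximum over all $2N$ vectors onto the coherence $\mu$ of the first $N$, the same application of the Welch bound with equality tracking, and the same reverse construction $\bfc_n(m)=\log_{-1}(\sgn(\bfphi_n(m)))$ extended self-complementarily. The only distinction is that you make explicit the case analysis behind the reduction to $\mu$, which the paper leaves implicit between its two displayed maxima.
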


In light of Theorem~\ref{theorem.Grey-Rankin Welch bound}, we turn our attention to the problem of constructing \textit{real-valued} constant-amplitude ETFs.
As noted in~\cite{DingF07}, harmonic examples of such ETFs can be constructed over the additive group of $\bbF_{2^{2j+2}}$ by forming a difference set $\calD$ as the support of a \textit{bent function}.
Such ETFs have parameters
\begin{equation}
\label{equation.bent parameters}
M=2^j(2^{j+1}\pm1),
\quad
N=2^{2j+2}
\end{equation}
for some $j\geq1$.
In the context of the previous sections, it is easier to understand ETFs with parameters~\eqref{equation.bent parameters} as special cases of harmonic ETFs generated from McFarland difference sets.
Indeed, $M=2^j(2^{j+1}-1)$ and $N=2^{2j+2}$  is a special case of~\eqref{equation.affine Kirkman parameters} where $q=2$.
Here, the corresponding McFarland difference set~\eqref{equation.McFarland vs affine 1} lies in the group $\calH=\calG\times\calV$ where $\calV$ is the additive group of $\bbF_{2^{j+1}}$ and $\calG=\bbZ_2^{j+1}$.
The resulting ETF is real since every element of $\calH$ has order $2$:
for any $h\in\calH$ and any character $\bfchi$ of $\calH$, $[\bfchi(h)]^2=\bfchi(h+h)=\bfchi(0)=1$ and so $\bfchi(h)=\pm1$.
Moreover, the set complement of any difference set is another difference set~\cite{JungnickelPS07}.
Thus, there also exists a real-valued constant-amplitude ETF of $N=2^{2j+2}$ vectors in a space of dimension $N-M=2^j(2^{j+1}+1)$;
this complementary ETF is a special case of the \textit{Naimark complement} of a tight frame $\set{\bfphi}_{n\in\calN}$, which in general, is formed by finding a orthonormal basis for the orthogonal complement of the row space of $\bfPhi$.

Surveying the literature, we find that all known real-valued harmonic ETFs are either regular simplices or have parameters~\eqref{equation.bent parameters}.
As we now explain, these are the only possible dimensions for a real-valued harmonic ETF.
To be precise, a code is \textit{linear} if the codewords are the points in some subspace of $\bbZ_2^M$.
And, in the case where the underlying real ETF is harmonic, the corresponding code generated by Theorem~\ref{theorem.Grey-Rankin Welch bound} is necessarily linear: signed characters are closed under multiplication, and so their corresponding codewords are closed under addition.
Moreover, it is known that a linear GRBE code with $M\geq 2$ must either have dimensions $M=2^{j+1}-1$ and $2N=2^{j+2}$ for some $j\geq1$---meaning its corresponding ETF is a real-valued constant-amplitude regular simplex---or alternatively, dimensions $M=2^j(2^{j+1}\pm1)$ and $2N=2^{2j+3}$ for some $j\geq 1$~\cite{McGuire97}.  As such, we find that:
\begin{corollary}
\label{corollary.real harmonic}
If there exists a real-valued harmonic ETF of $N$ vectors in an $M$-dimensional space with $M\geq 2$, then either (i) the ETF is a regular simplex of $N=2^{j+1}$ vectors for some $j\geq1$ or (ii) the dimensions of the ETF are of the form~\eqref{equation.bent parameters}.
\end{corollary}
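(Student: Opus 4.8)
The plan is to route the statement through Theorem~\ref{theorem.Grey-Rankin Welch bound} and the known classification of linear Grey-Rankin-bound-equality (GRBE) codes. First I would note that every harmonic ETF has constant amplitude, so a real-valued harmonic ETF is in particular a real-valued constant-amplitude ETF; Theorem~\ref{theorem.Grey-Rankin Welch bound} then applies in its converse direction, producing from it a self-complementary $(M,2N)$-binary code $\set{\bfc_n}_{n=1}^{2N}$ via $\bfc_n(m):=\log_{-1}(\sgn(\bfphi_n(m)))$ together with the complements $\bfc_{n+N}:=\bfc_n+\bfone$, and this code achieves the Grey-Rankin bound~\eqref{equation.Grey-Rankin}. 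The whole argument then reduces to two steps: (a) showing that \emph{this particular} code is linear, and (b) quoting the classification of linear GRBE codes.

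For step (a), recall that the frame vectors of a harmonic ETF are the characters $\set{\bfchi_g}_{g\in\calG}$ of an Abelian group $\calG$ restricted to a difference set $\calD$ and normalized, so $\bfphi_g(d)=M^{-\frac12}\bfchi_g(d)$ for $d\in\calD$. Reality of the ETF forces each value $\bfchi_g(d)$ into $\set{+1,-1}$, whence $\sgn(\bfphi_g(d))=\bfchi_g(d)$. The key observation is that these restricted characters are closed under pointwise multiplication, since $\bfchi_g\bfchi_{g'}=\bfchi_{g+g'}$ is again a character and is again real on $\calD$. Because $\log_{-1}$ is a group isomorphism from $(\set{+1,-1},\times)$ onto $(\bbZ_2,+)$, the assignment $g\mapsto\bfc_g$ is a homomorphism from $\calG$ into $\bbZ_2^M$, so its image $C_0:=\set{\bfc_n}_{n=1}^{N}$ is a linear subcode. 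I would then verify that appending complements preserves linearity: any sum of two elements of $C_0\cup(C_0+\bfone)$ lands back in $C_0\cup(C_0+\bfone)$, and the zero codeword $\bfc_0$ (from the trivial character) lies in $C_0$, so the full $(M,2N)$ code is a subspace; moreover it has $2N$ distinct codewords because an ETF contains no antipodal pair, forcing $\bfone\notin C_0$.

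Finally, for step (b) I would invoke McGuire's classification~\cite{McGuire97}: a linear GRBE code with $M\geq 2$ has either $M=2^{j+1}-1$, $2N=2^{j+2}$, or $M=2^j(2^{j+1}\pm1)$, $2N=2^{2j+3}$, for some $j\geq 1$. Translating back to the ETF, the first case gives $N=2^{j+1}=M+1$, i.e.\ a regular simplex (case (i)), while the second gives $N=2^{2j+2}$ and $M=2^j(2^{j+1}\pm1)$, which are exactly the dimensions~\eqref{equation.bent parameters} (case (ii)). I expect the only genuinely delicate point to be the linearity step (a) — in particular justifying that reality forces the characters to be $\pm1$-valued and that the self-complementary extension remains a subspace — since everything else is a direct appeal to~\cite{McGuire97} followed by routine parameter arithmetic.
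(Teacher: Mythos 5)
Your proposal is correct and takes essentially the same route as the paper: convert the real harmonic ETF into a self-complementary GRBE code via Theorem~\ref{theorem.Grey-Rankin Welch bound}, observe that harmonicity makes this code linear (signed characters are closed under multiplication, hence codewords are closed under addition), and then invoke McGuire's classification of linear GRBE codes~\cite{McGuire97}. Your step (a) simply fleshes out details---the $\pm1$-valuedness of real characters, closure of the self-complementary extension under addition, and $\bfone\notin C_0$---that the paper asserts in a single sentence.
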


This corollary illustrates how coding theory can be used to find new results in frame theory.
However, from the point of view of coding theory itself, this corollary is disappointing: GRBE codes with these parameters are already known to exist~\cite{McGuire97}.
It is here that the not-necessarily-harmonic ETFs of Theorem~\ref{theorem.Kirkman construction} truly shine: we can find Kirkman ETFs that lie outside the confines of Corollary~\ref{corollary.real harmonic};
these ETFs are necessarily non-harmonic, and the resulting codes are necessarily nonlinear.

To be precise, in order to construct real-valued constant-amplitude ETFs using Theorem~\ref{theorem.Kirkman construction}, we want both our unimodular regular simplex $\set{\bff_{u}}_{u=0}^{R}$ as well as our unimodular orthogonal basis $\set{\bfh_{s}}_{s\in\calS}$ to be real-valued.
Since such a simplex necessarily extends to a real unimodular orthogonal basis, we in particular want Hadamard matrices of size $R+1$ and $S=B/R=V/K$.
It is well-known that this requires both $R+1$ and $V/K$ to either be $2$ or divisible by $4$; the \textit{Hadamard conjecture} posits that these necessary conditions are sufficient.
Also, recall that in order for a resolvable $(2,K,V)$-Steiner system to exist, we necessarily have $V\equiv K\bmod K(K-1)$.
Writing $V=WK(K-1)+K$ for some $W\geq1$, we want
\begin{equation}
\label{equation.Hadamard necessary 1}
R+1=WK+2,\quad \tfrac VK=W(K-1)+1
\end{equation}
to either be $2$ or divisible by $4$.
Note that since $W\geq1$ and $K\geq2$, we cannot have $R=2$, and so this condition on $R$ is equivalent to having $WK+2\equiv 0\bmod 4$.
Meanwhile, if $W(K-1)+1=2$ then we necessarily have $K=2$ and $W=1$; the resulting $(2,2,4)$-Steiner system~\eqref{equation.6x16 B} yields the $6\times 32$ code of Figure~\ref{figure.6x32 C}; as discussed above, linear GRBE codes with these parameters are well-known, and can be generated as McFarland harmonic ETFs, letting $j=1$ in~\eqref{equation.bent parameters}.
As such, we also assume $W(K-1)+1\equiv 0\bmod 4$.
At this point, subtracting $W(K-1)+1$ from $WK+2$, we see that these two necessary conditions are equivalent to having $W\equiv 3\bmod 4$ and $K\equiv 2\bmod 4$.
Combining the above discussion with Theorems~\ref{theorem.Kirkman construction} and~\ref{theorem.Grey-Rankin Welch bound} gives the following result:
\begin{corollary}
\label{corollary.real Kirkman}
Given $K\equiv 2\bmod 4$ and $W\equiv 3\bmod 4$, let $V=K[W(K-1)+1]$.
Then both parameters in~\eqref{equation.Hadamard necessary 1} are divisible by $4$, and if there exist Hadamard matrices of these sizes and there also exists a resolvable $(2,K,V)$-Steiner system, then there exists a real-valued Kirkman ETF with
\begin{align*}
M&=(WK+1)[W(K-1)+1],\\
N&=K(WK+2)[W(K-1)+1],
\end{align*}
meaning there exists a $(M,2N)$-self-complementary code that achieves the Grey-Rankin bound~\eqref{equation.Grey-Rankin}.
\end{corollary}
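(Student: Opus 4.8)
The plan is to assemble this corollary directly from Theorems~\ref{theorem.Kirkman construction} and~\ref{theorem.Grey-Rankin Welch bound}, handling the two arithmetic claims first. Since $K\equiv2$ and $W\equiv3\bmod4$, we get $WK\equiv2\bmod4$, so $R+1=WK+2\equiv0\bmod4$; and $K-1\equiv1\bmod4$ forces $W(K-1)\equiv3\bmod4$, so $S=V/K=W(K-1)+1\equiv0\bmod4$. Hence both orders in~\eqref{equation.Hadamard necessary 1} are genuine multiples of $4$. Substituting $V=K[W(K-1)+1]$ into the Steiner relation $R=(V-1)/(K-1)$ and factoring $V-1=(K-1)(WK+1)$ gives $R=WK+1$, from which the Steiner-ETF dimensions $M=B=VR/K$ and $N=V(R+1)$ reduce to the stated $M=(WK+1)[W(K-1)+1]$ and $N=K(WK+2)[W(K-1)+1]$.

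Next I would build the real-valued ingredients that Theorem~\ref{theorem.Kirkman construction} consumes. Deleting one row from the size-$(R+1)$ Hadamard matrix produces $R+1$ columns in $\bbR^\calR$ with $\pm1$ entries and pairwise inner products of constant magnitude, that is, a real unimodular regular simplex $\set{\bff_u}_{u=0}^{R}$, exactly as in the construction around~\eqref{equation.6x16 F}. The rows of the size-$S$ Hadamard matrix give a real unimodular orthogonal basis $\set{\bfh_s}_{s\in\calS}$ for $\bbR^\calS$. Feeding these, together with the assumed resolvable $(2,K,V)$-Steiner system, into~\eqref{equation.definition of Kirkman ETF} yields entries $\bfpsi_{u,v}(r,s)=B^{-\frac12}\bff_u(r)\bfh_{s(r,v)}(s)\in\set{\pm B^{-\frac12}}$, so the resulting Kirkman ETF is real-valued with constant amplitude $M^{-\frac12}$ (recall $M=B$).

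Finally I would invoke Theorem~\ref{theorem.Grey-Rankin Welch bound}. Identifying this real-valued constant-amplitude ETF of $N$ vectors in $\bbR^M$ with a binary code via $\bfc_n(m)=\log_{-1}(\sgn(\bfphi_n(m)))$ and extending self-complementarily by $\bfc_{n+N}=\bfc_n+\bfone$ produces an $(M,2N)$-self-complementary code, which by that theorem meets the Grey-Rankin bound~\eqref{equation.Grey-Rankin} with equality.

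There is no deep obstacle here---the corollary is essentially a synthesis of earlier results, so the real work is bookkeeping. The step most prone to error is the parameter reduction: one must verify that the simplification $R=WK+1$ propagates correctly through $M=VR/K$ and $N=V(R+1)$ to recover the stated closed forms, and confirm that both Hadamard orders are divisible by $4$ (not merely even), so that the existence hypotheses are not vacuous for the natural parities. Once that is confirmed, everything else follows mechanically from Theorems~\ref{theorem.Kirkman construction} and~\ref{theorem.Grey-Rankin Welch bound}.
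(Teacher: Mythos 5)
Your proposal is correct and follows essentially the same route as the paper: verify the mod-$4$ arithmetic making both Hadamard orders in~\eqref{equation.Hadamard necessary 1} divisible by $4$, reduce the Steiner parameters via $V-1=(K-1)(WK+1)$ to get $R=WK+1$ and hence the stated $M$ and $N$, build the real unimodular simplex and orthogonal basis from the two Hadamard matrices, and feed everything through Theorem~\ref{theorem.Kirkman construction} and then Theorem~\ref{theorem.Grey-Rankin Welch bound}. The paper's own derivation is just this same synthesis run in the reverse (necessity) direction to discover the congruence conditions, so there is no substantive difference.
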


To explore the consequences of this result, we first consider the case where $K=2$.
Recall from Section~\ref{section.Kirkman ETFs} that $(2,2,V)$-Steiner systems are resolvable as a round-robin tournament for any even $V\geq 4$.
Here, $V=K[W(K-1)+1]=2W+2$ for some $W\equiv 3\bmod 4$.
To make the resulting Kirkman ETF real-valued, we want Hadamard matrices of size $R+1=2W+2=V$ and $V/K=W+1=V/2$.
It thus suffices for there to exist a Hadmard matrix of size $V/2$, since we can take the tensor product of it with the canonical Hadamard matrix of size $2$ to form one of size $V$.
When $V$ is a power of $2$, the ETFs produced by Corollary~\ref{corollary.real Kirkman} have the same dimensions as those produced by the real-valued harmonic ETFs of~\eqref{equation.bent parameters}.
However, when $V$ is not a power of $2$, Corollary~\ref{corollary.real harmonic} tells us that these Kirkman ETFs cannot be harmonic.
For example, letting $W=11$ yields $V=24$, and we know there exists a Hadamard matrix of size $V/2=12$.
As such, there exists a real-valued $276\times 576$ Kirkman ETF that cannot be harmonic,
and the resulting $276\times 1152$ GRBE code is not linear.
There are an infinite number of nonharmonic Kirkman ETFs of this type: at the very least, Paley's quadratic-residue based construction of Hadamard matrices gives the existence of such an ETF whenever $W\equiv 3\bmod 4$ is a prime power.

For $K\equiv 2\bmod 4$ such that $K>2$, the true implications of Corollary~\ref{corollary.real Kirkman} are harder to ascertain.
We could not find any explicit infinite families of resolvable $(2,K,V)$-Steiner systems for such values of $K$ in the literature.
For $K=6$ in particular, we need $V\equiv 6\bmod 30$; it is known~\cite{FurinoMY96} that a resolvable $(2,6,V)$-Steiner system (i) does not exist for $V=36$, (ii) may or may not exist for $V=66$ and $V=96$, (iii) does exist for $V=126$ (unital design), $V=156$ (projective geometry) and $V=186$.
Unfortunately, the only one of these values of $V$ that satisfies the hypotheses of Corollary~\ref{corollary.real Kirkman} is $V=96$.
If a resolvable $(2,6,96)$-Steiner system does exist it would, to our knowledge, give the first example of a GRBE code whose redundancy $2N/M=3840/304$ is not approximately $4$.
For $K=10$ and larger, the minimum corresponding $V$ which could satisfy the assumptions of Corollary~\ref{corollary.real Kirkman} is $V=280$, which lies beyond the range of the tables of known resolvable designs we encountered.

At this point, we turn to asymptotic existence results.
Recall that for any $K$, there exists $V_0(K)$ such that for all $V\geq V_0(K)$ with $V\equiv K\bmod K(K-1)$, there exists a resolvable $(2,K,V)$-Steiner system~\cite{RayW73}.
As such, if the Hadamard conjecture is true, then for any $K\equiv 2\bmod 4$, there exists $W_0(K)$ such that for all $W\geq W_0(K)$ with $W\equiv 3\bmod 4$, there exists a real-valued Kirkman ETF whose parameters are given by Corollary~\ref{corollary.real Kirkman}.
In particular, if the Hadamard conjecture is true, for any $K\equiv 2\bmod4$ there exists real-valued constant-amplitude ETFs and GRBE codes whose redundancies are approximately $K$ and $2K$, respectively.

\section{Conclusions and Future Work}
We now have a method for transforming certain Steiner ETFs into constant-amplitude ETFs, as desired for certain waveform design applications.
We also now know that an important class of previously discovered harmonic ETFs arise in this fashion, making them less attractive for deterministic compressed sensing.
Finally, we have seen how the problem of constructing a real-valued constant-amplitude ETF is equivalent to that of constructing a type of optimal binary code, allowing us to apply results from one area to the other.
Several important questions remain open:
To what degree do Singer harmonic ETFs satisfy RIP?
Are Denniston Kirkman ETFs harmonic?
More generally, can we use these results along with ideas from resolvable designs to build new examples of difference sets, or vice versa?
Do there exist nontrivial real-valued constant-amplitude ETFs whose redundancy is not essentially $2$?
Equivalently, do there exist nontrivial GRBE codes whose redundancy is not essentially $4$?

\section*{Acknowledgment}
This work was supported by NSF DMS 1042701 and NSF CCF 1017278.
The views expressed in this article are those of the authors and do not reflect the official policy or position of the United States Air Force, Department of Defense, or the U.S.~Government.

\end{document}